	\newtheorem{corollary}{Corollary}
	\newtheorem{lemma}{Lemma}
\newtheorem*{remark}{Remark}
\begin{document}
%
\title{Age of Information in Reservation Multi-Access Networks with Stochastic Arrivals: Analysis and Optimization}
%
%
%

\author{\IEEEauthorblockN{Qian Wang and He (Henry) Chen}
	  \thanks{This work was supported in part by the Innovation and Technology Fund (Project No. ITS/204/20) established under the Innovation and Technology Commission of the Hong Kong Special Administrative Region, China and the CUHK direct grant for research under Project 4055166. The work of Q. Wang is supported in part by the Research Talent Hub PiH/456/21 under Project ITS/204/20.}
	 	\thanks{Q. Wang and H. Chen are with Department of Information Engineering, The Chinese University of Hong Kong, Hong Kong SAR, China (email: \{qwang, he.chen\}@ie.cuhk.edu.hk).}
	 		\thanks{Part of the paper was presented on IEEE ISIT 2022 \cite{isit2022}.}}
\maketitle
\begin{abstract}	
This paper analyzes and optimizes the average Age of Information (AAoI) of Frame Slotted ALOHA with Reservation and Data slots (FSA-RD) in a multi-access network, where multiple users transmit their randomly generated status updates to a common access point in a framed manner. Each frame consists of one reservation slot and several data slots. The reservation slot is further split into several mini-slots. In each reservation slot, users that want to transmit a status update will randomly send short reservation packets in one of the mini-slots to contend for data slots of the current frame. The reservation is successful only if one reservation packet is sent in a mini-slot. The data slots are then allocated to those users that succeed in the reservation slot. 
In the considered FSA-RD scheme, one user with a status update for transmission, termed active user, may need to perform multiple reservation attempts before successfully delivering it. As such, the number of active user(s) in different frames are dependent and thus the probability of making a successful reservation varies from frame to frame, making the AAoI analysis non-trivial. We manage to derive an analytical expression of AAoI for FSA-RD by characterizing the evolution of the number of active user(s) in each frame as a discrete-time Markov chain. We then consider the FSA-RD scheme with one reservation attempt per status update, termed FSA-RD-One. Thanks to the independent frame behaviors of FSA-RD-One, we attain a closed-form expression for its AAoI, which is further used to find the near-optimal reservation probability. Our analysis reveals the impact of key protocol parameters, such as frame size and reservation probability, on the AAoI. Simulation results validate our analysis and show that the optimized FSA-RD outperforms the optimized slotted ALOHA.

\end{abstract}


%

\section{Introduction}
Recent years have witnessed increasing research interest in a new performance metric, Age of Information (AoI), thanks to its capability of quantifying the timeliness of data transmissions in status update systems \cite{kaul2012real,addsun,adds2}. The timeliness of data transmission plays an important role in various Internet of Thing (IoT) applications, particularly in real-time monitoring systems. In these systems, the dynamics of the monitored process should be grasped at the monitor in a timely manner to ensure in-time responses. AoI is defined as the time elapsed since the generation time of the latest received status update at the receiver\cite{kaul2012real}. According to its definition, AoI is jointly determined by the transmission interval, transmission delay, and status generation process.

The trend of massive connectivity of IoT networks \cite{massiot} and the importance of information freshness have recently attracted considerable efforts in optimizing information freshness of multi-access networks. In this context, how to dynamically schedule the data transmission of users in multi-access networks to minimize the network-wide average AoI becomes a critical problem. Polling-based centralized scheduling schemes \cite{kadota2018scheduling,kadota2019minimizing,8437712,bedewy2019age,nikolas1,ass2,ass3} and contention-based random access schemes \cite{unma,slotaloha,gurandom,xrchen,9488897,maatouk2019minimizing1,modernra,rand1} are two major research branches. {The polling-based centralized scheme involves a centralized controller that possesses knowledge of the entire system state, enabling it to make scheduling decisions and send polling packets to grant scheduled devices to conduct data transmissions. On the other hand, the contention-based random access scheme allows all users to contend for access to the shared communication medium, where only the users that succeed in contention can update their statuses.} The sporadic IoT traffic makes contention-based random access schemes more preferable in large-scale networks. This is because centralized scheduling is usually associated with excessive overhead and high operation complexity. In contrast, contention-based multi-access schemes have acceptable overhead with simple operation, and can flexibly adapt to the networks with a varying number of devices.

Previous studies on contention-based multi-access schemes have explored the average AoI of ALOHA and CSMA protocols. The average AoI of slotted ALOHA systems was first characterized in \cite{unma}. Specifically, the ALOHA-alike policy, in which each user transmits its status updates with a fixed transmission probability was analyzed and compared with the centralized scheduling policy \cite{unma}. Inspired by this seminal work, an age-dependent slotted ALOHA policy was devised and analyzed in \cite{slotaloha,gurandom}, where each user randomly accesses the shared channel only when its instantaneous AoI exceeds a predetermined threshold. Different from previous studies considering \textit{generate-at-will} status generation model, the authors in \cite{xrchen,9488897,maatouk2019minimizing1} investigated the contention-based multi-access schemes with stochastic arrival of status updates. In \cite{xrchen,9488897}, status generation at each user is modeled by independent and identically distributed (i.i.d.) Bernoulli process. The analytical expression of the average AoI of a stationary randomized policy was derived, and the asymptotic optimality of slotted ALOHA with a small status arrival rate in the regime of infinite users was analyzed in \cite{xrchen}. An age-based thinning method was proposed to further improve the AoI performance of slotted ALOHA systems. The authors in \cite{9488897} derived approximate expressions for the average AoI of both slotted ALOHA and CSMA schemes by developing a discrete-time model. The AoI of CSMA was also studied in \cite{maatouk2019minimizing1}, where the stochastic hybrid system was applied to derive the accurate average AoI expression for \textit{generate-at-will} model and a tight upper bound of the average AoI for stochastic arrival model, respectively. Very recently, the AoI of irregular repetition slotted ALOHA with successive interference cancellation (SIC) was studied and compared with that of slotted ALOHA in \cite{modernra}. Moreover, SIC-aided age-independent random access was studied and analyzed in \cite{rand1}.

{Apart from the polling-based and contention-based multi-access schemes, there is another type of dynamic allocation schemes, called reservation-based multi-access (R-MA) \cite{Bing2002}, {which allows users to access the network asynchronously or transmit data intermittently. This access mechanism has practical applications in satellite communication \cite{satellite}, Long Term Evolution (LTE) \cite{petar1,petar2}, and WiFi networks \cite{WLAN}.} Specifically, R-MA schemes divide each frame into reservation phase and data transmission phase, which are used to transmit short reservation packets and data packets, respectively. A representative scheme of R-MA is Frame Slotted ALOHA with Reservation and Data slots (FSA-RD) \cite{roberts1973dynamic,szpankowski1983analysis,CASARESGINER201915}. In each frame of FSA-RD, one reservation slot, consisting of multiple mini-slots, is introduced for users to contend for the subsequent data slots. Only those users made successful reservations are allowed to transmit in their reserved data slots of the said frame, leading to non-conflicting data transmission. Furthermore, FSA-RD schemes squeeze the potential collisions among users into the shorter mini-slots, reducing the time overhead for contention. In light of these features, the FSA-RD scheme has great potential to reduce the network-wide AoI. Nevertheless, to the best of the authors' knowledge, the AoI of FSA-RD protocols has not been thoroughly characterized in the open literature.}
	

As an attempt to fill the gap, in this paper we investigate the average AoI (AAoI) of FSA-RD. Specifically, we consider a symmetric multi-access network, where each user transmits its randomly generated status updates to an access point (AP) in a framed manner. Each frame consists of one reservation slot and several data slots. The reservation slot is split into several mini-slots. At the beginning of each frame, users with a status update randomly decide whether to make reservations. For these users that decided to transmit the status update, they will uniformly select one of the mini-slots to send a reservation packet. The reservation is successful only if one reservation packet is sent in a mini-slot. These users that succeed in making reservations will be allocated with a dedicated data slot to update their status. The stochastic arrivals of status updates, the randomness of reservation attempts, and the entangled reservation outcomes altogether make the theoretical analysis of the AAoI for the considered system non-trivial. 

\subsection{Contributions}
{The main contributions of this paper are twofold. Specifically, we first analyze the AAoI of FSA-RD.
In this paper, we refer to these users that have a status update for transmission as active users. Under FSA-RD, active users will randomly decide to make reservations at the beginning of a frame. {Since the reservations made in one frame may fail due to collisions, active users are likely to make multiple reservations. Thus, the number of active users in the next frame will depend on the number of active users in the current frame.} Based on this observation,
we apply a discrete-time Markov chain (DTMC)  to characterize the evolution of the number of active users in each frame, which yields the steady-state distribution for the number of active users in each frame. We then make use of the steady-state distribution to derive an analytical expression of the AAoI of the considered system. 
Nevertheless, the steady-state distribution of the formulated DTMC in FSA-RD can only be calculated numerically due to its complicated evolution. As such, though the derived AAoI expression can be used to design key system parameters (e.g., reservation probability and frame length) of FSA-RD via exhaustive search, {we cannot theoretically observe any properties of the optimized system parameters.}

{To gain more design insights, we then consider a simplified version of FSA-RD, termed FSA-RD-One.
Under FSA-RD-One, each status update is allowed to have at most one reservation/transmission attempt. This makes the distribution of active users in each frame depend only on the status arrival process. And thus the number of active users in different frames are independent. Leveraging such a frame-independent property, we derive a closed-form expression for the AAoI of the FSA-RD-One. After observing the structure of the AAoI expression of FSA-RD-One, we then attain a tight upper bound of the AAoI for FSA-RD-One, which is less complicated and can be used to optimize the reservation probability and frame size efficiently. Based on the AAoI upper bound, a near-optimal reservation probability is derived. {The explicit formula for the near-optimal reservation probability of FSA-RD-One provides more system design insights compared with the analysis of FSA-RD. Specifically,} we find that for a given frame size, the reservation probability that makes the expected number of users making reservations approach the number of mini-slots achieves near-optimal AAoI performance in FSA-RD-One. 
We note that when the frame size is fixed, compared with FSA-RD-One, FSA-RD tends to have more active users in each frame, thus requiring a smaller reservation probability to avoid high reservation collisions. 
}

Simulation results are provided to validate the theoretical expressions of AAoI for the considered two FSA-RD schemes. The impacts of frame size and reservation probability on AAoI are also evaluated. The accuracy of the derived near-optimal reservation probability for the FSA-RD-One is confirmed by simulations. The optimal performance of the two FSA-RD schemes are also compared. The results show that the optimized AAoI of FSA-RD is upper bounded by that of FSA-RD-One. Further, their performance tends to coincide as the status arrival rate of the network increases. Finally, we compare the optimized AAoI of these two FSA-RD-based transmission schemes and that of slotted ALOHA. The comparison results show that the optimized FSA-RD scheme outperforms that of the slotted ALOHA and the performance gap between them enlarges as the status arrival rate of the network goes large. }

\section{System Model}\label{sec2}
\subsection{System Description}
We consider a multi-access network, where $N$ users share a wireless channel to transmit time-sensitive information to an AP in a framed manner. Users are frame-synchronized. The Frame Slotted ALOHA with Reservation and Data slots (FSA-RD) scheme \cite{roberts1973dynamic,szpankowski1983analysis} is adopted. Specifically, each frame consists of $M$ slots. At the beginning of each time slot, the information source of each user randomly generates a time-stamped status update, which is modeled by a Bernoulli process. The network is considered to be symmetric, and the status generation probabilities for all users are equal and denoted by $\rho$\footnote{For asymmetric networks with different status update generation rates, our analysis becomes applicable if those users with higher status update generation rates proactively drop some status update so that a symmetric network with all users having the same ``effective'' status generation rate is mimicked.}. {Our analytical framework can also be used to analyze the generation-at-will model, which can be regarded as a special case of stochastic status arrival model with $\rho=1$.} {The key variables are summarized in Table \ref{tablen}.} We consider that the status update generated within one frame will be considered for transmission at the beginning of the next frame. That is, the status update scheduled for transmission in the current frame cannot be preempted by a newly generated one. For the sake of optimizing the information freshness, each user only needs to keep the latest status update generated within each frame. 
In this case, whether a user has a new status update to transmit at the beginning of the next frame follows a Bernoulli distribution with the parameter $p=1-(1-\rho)^M$. 
Let $I_n(k)$ denote the indicator that equals $1$ if user $n$ has one status update for transmission at the beginning of frame $k$, and equals $0$ otherwise.
{
\begin{table}[!http]
    \centering
    \caption{Notation Table for Key Variables}
    \begin{tabular}{cc}
        \toprule
        $N$ & Number of users \\
        $M$ & Frame size \\
        $V$ & Number of mini-slots in each reservation slot \\
        $\rho$ & Status generation probability at each time slot\\
        $\gamma$ & Reservation probability of active users \\
          $S_{i-1}$ & Slot-level service time of $(i-1)$th received status update \\
        $X_i$ & Slot-level inter-departure time between the $(i-1)$th and $i$th  status receptions\\
        $Y_i$ & Frame-level inter-departure time between the $(i-1)$th and $i$th status receptions\\
        $p_s$& The probability of successful transmission for active users in FSA-RD \\
        $p_s^o$ & The probability of successful transmission for active users in FSA-RD-One\\
        ${\alpha_{i-1}}$ & The index of the reserved slot for $(i-1)$th received status\\
        $\varphi_{\alpha}$ & The probability of reserving $(\alpha-1)$th data slot in FSA-RD\\
        $\varphi_{\alpha}^o$ & The probability of reserving $(\alpha-1)$th data slot in FSA-RD-One\\
        \bottomrule
    \end{tabular}
    \label{tablen}
\end{table}}
\begin{figure}
	\centering \scalebox{1.0}{\includegraphics{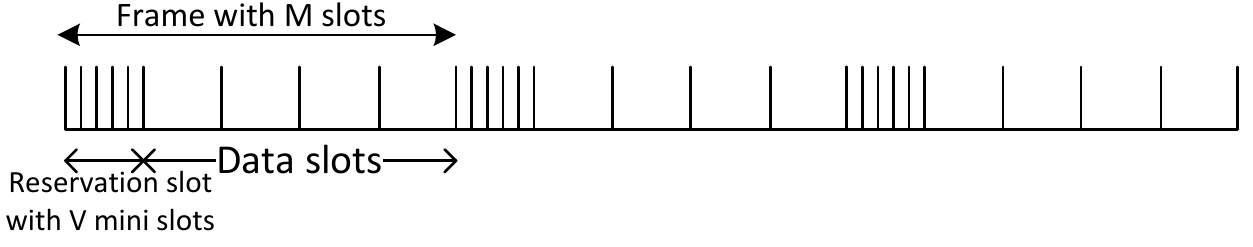}}
	\caption{Frame structure.}
	\label{fig1}
\end{figure}
In FSA-RD, the $M$ slots of each frame are further divided into one reservation slot and $M-1$ data slots. Specifically, the first slot of each frame is the reservation slot, used for making reservations, while the rest $M-1$ data slots are used for sending status updates, as depicted in Fig. \ref{fig1}. Each reservation slot consists of $V$ mini-slots. At the beginning of each frame, each user with a status update for transmission (i.e., active user) will make an ALOHA-alike reservation with the probability $\gamma$. Denote by $J_n(k)$ the reservation indicator of user $n$, which equals to $1$ if user $n$ chooses to make a reservation in frame $k$, and equals to $0$ otherwise. That is, $\mathrm{Pr}(J_n(k)=1|I_n(k)=1)=\gamma$. Once user $n$ decides to reserve, it will uniformly select one of the $V$ mini-slots to send its reservation packet. {The uniform mini-slot selection scheme can reduce the probability of multiple active users selecting the same mini-slot, as compared to non-uniform schemes}\footnote{This study focuses on the evaluation of the AAoI for the considered FSA-RD protocols. Further design and optimization of the reservation scheme with varying reservation probabilities have been left as a future work.}. The transmission of the reservation packet and that of the status update packet are assumed to take $1$ mini-slot and $1$ time slot, respectively, considering that the reservation packet generally contains less information than the status update packet\footnote{{The value of $V$ is determined by the time synchronization precision of the system and the transmission durations of both reservation and status update packets.}}. If more than one user transmits reservation packets in the same mini-slot, a collision occurs and all reservations made in that mini-slot fail; otherwise, the reservation information will be received by the AP. 

At the end of each reservation slot, the AP will inform all users of the reservation results. It is desirable to have a low-overhead feedback scheme for the AP. 
To that end, we note that $V$ mini-slots are used for the reservation, and each user sending its reservation packet knows which mini-slot it tries to reserve. As such, we can let the AP broadcast a $V$-bit feedback, where the $i$th bit equals $1$ if the AP successfully receives a reservation packet in $i$th mini-slot, and equals $0$, otherwise. Once feedback is received, each user will check the bit corresponding to the mini-slot that it sent the reservation packet to. If the bit equals 1 and is one of the first $M-1$ 1's in the feedback sequence, the user knows that its reservation is successful. Otherwise, the user will keep silent. {We note that the allocated data slots for these successful users are indicated by the order of its 1 in all 1's of the feedback sequence.}  
Considering the feedback is short, we ignore the duration of the feedback from the AP for simplicity. Upon the reservation results are released, the users made successful reservations will take turns to transmit their status updates in their allocated data slots. The status update packets are assumed to be successfully received by the AP as there is no collision. Note that at most $M-1$ users can successfully update their statuses in each frame as there are totally $M-1$ data slots. If less than $M-1$ users make successful reservations, the unreserved data slot(s) will be wasted in the current frame\footnote{The investigation for the case with variable frame length has been left as a future work.}. In this sense, the frame size $M\leq V+1$ and should be neither too large nor too small.

{To summarize, the considered system involves two distinct processes for each user: the status generation process and the status transmission process. 
\begin{itemize}
 \item For the status generation process, each user randomly generates a status update at the beginning of each time slot with probability $\rho$. Only the freshest status update generated within one frame will be considered for transmission at the beginning of the next frame.
\item For the status transmission process, at the beginning of a frame, a user with a status update for transmission, will decide whether to make a reservation with probability $\gamma$. Users that successfully reserve a data slot can transmit their status updates to the AP in their allocated data slots. Other users remain silent during the data slots of this frame and prepare for contending a transmission opportunity in the next frame if they have a status update to transmit.  
\end{itemize}}
\subsection{The Evolution of AoI}
The AoI of the user $n$, denoted by $\delta_n(t)$, measures the timeliness of the status updates from the perspective of the AP, which is defined as the time elapsed since the generation time of the most recently received status update from user $n$ at the AP. Mathematically, the AoI $\delta_n(t)$ at time $t$ is $t-u_n(t)$, where $u_n(t)$ denotes the generation time of the latest received status update of user $n$ at the AP until time $t$. 
\begin{figure}
	\centering \scalebox{0.8}{\includegraphics{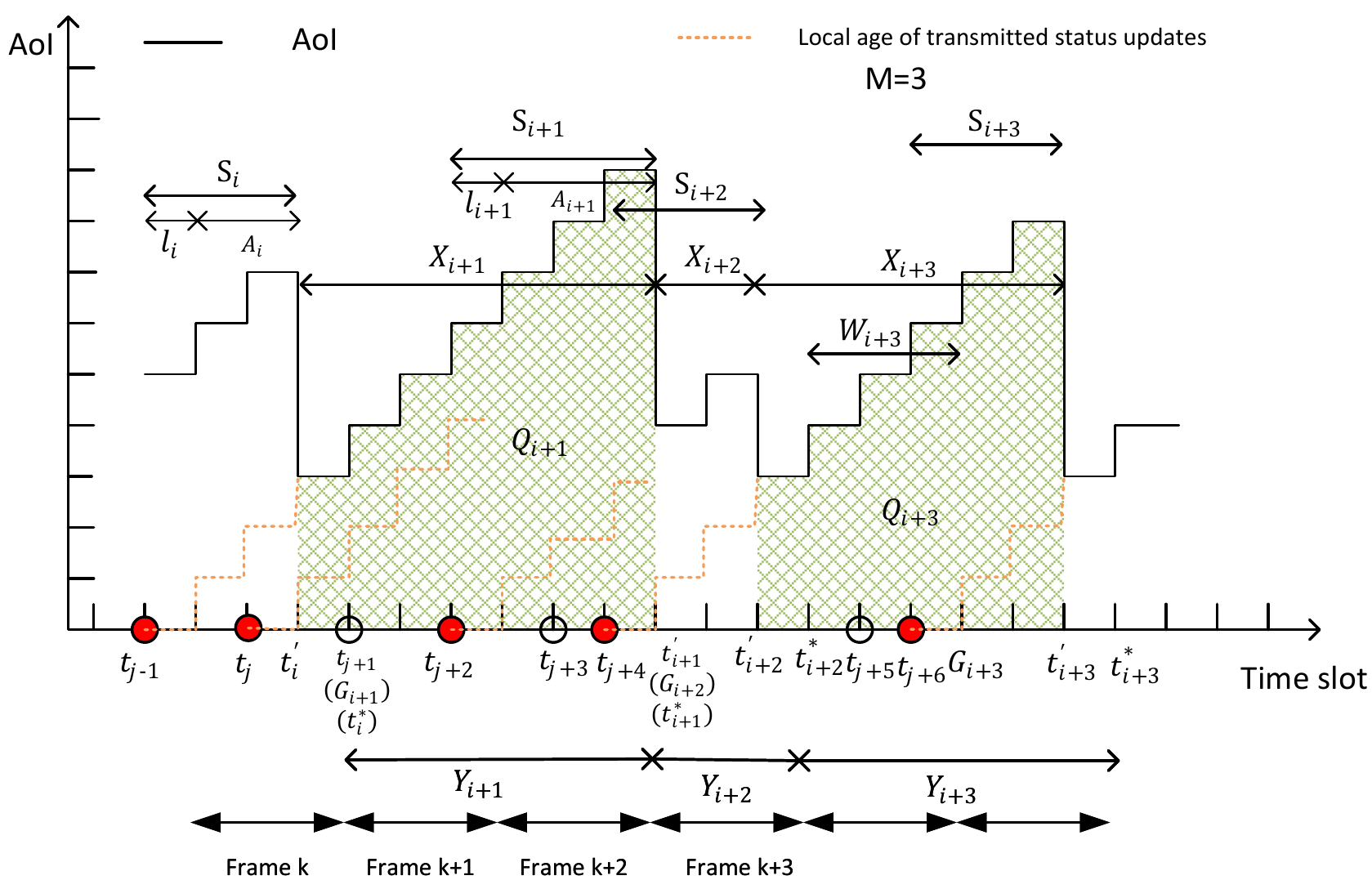}}
	\caption{The evolution of AoI with $M=3$. Each circle indicates the generation of a status update in the corresponding time slot, while the circles in red are those status updates that are considered for transmission. The black solid line is the AoI curve and the orange dotted line is the local age of status updates for transmission.}
	\label{fig2}
\end{figure}

The black solid line in Fig. \ref{fig2} illustrates one example of how the AoI of user $n$ at the AP evolves in the considered system. We can see from Fig. \ref{fig2} that the AoI linearly increases until the AP receives a status update from user $n$, when it is reset to the service time of the successfully transmitted status update. The service time is the difference between the reception time of a new status at the AP and its generation time at user $n$. Denote by $t_j$ the generation time of $j$th status update, and by $t_i^{'}$ the time of the reception of $i$th received status. The different indexes in $t_j$ and $t_i^{'}$ are due to preemption and discard in packet management caused by reservation failures and no-reservation attempts. As shown in Fig. \ref{fig2}, only those status updates that generated at $t_{j-1}$, $t_{j+2}$, $t_{j+4}$ and $t_{j+6}$, were successfully received by the AP at $t_i^{'}$, $t_{i+1}^{'}$, $t_{i+2}^{'}$ and $t_{i+3}^{'}$, respectively. The remaining status updates were replaced by late generated status updates. For example, the status update generated at $t_j$ was transmitted in frame $k+1$, but it might suffer from either reservation failure or no-reservation attempt, and was replaced by the late generated status update. According to the definition of $u(t)$, the service time of $i$th received status update at the AP can be expressed as $S_i=t_i^{'}-u(t_i^{'})$. Recall that each status update may need to wait in the local buffer before being scheduled for transmission. We thus need to define the local age of transmitted status updates of user $n$, denoted by $g_n(t)$, which is the number of time slots that the latest status update has stayed in the local buffer after its generation. In this case, if the AP receives a status update transmitted from user $n$ in time slot $t$, then $\delta_n(t+1)=g_n(t)+1$; otherwise, $\delta_n(t+1)=\delta_n(t)+1$. Other variables shown in Fig. \ref{fig2} will be explained in the following section for further evaluation of AoI.%
\section{Analysis of AAoI for FSA-RD}
 We note that the AoI of each user is identically distributed due to the symmetric network setup. As such, in this section, we focus on analyzing the AAoI of one particular user to represent the network-wide AAoI and omit the user index for brevity. To derive the analytical expression of AAoI, we first present some useful definitions. 
 
 {Based on the frame structure of FSA-RD, we define $t_{i}^{*}$ as the end of the frame within which the AP receives $i$th status update, which is also the beginning of the next frame after the reception of the $i$th status update. As the status updates that randomly arrive in one frame will be considered for transmission in the subsequent frames, we define the waiting time $W_i$ as the time elapsed from $t_{i-1}^{*}$, until the beginning of the first frame when the user has a fresh status update for transmission, denoted by $G_{i}$. 
We thus have $W_i=G_i-t_{i-1}^{*}$. Owing to the frame-based definition, $W_i$ takes values from $\{0,M,2M,...\}$. Recall that whether a user has a new status update to transmit at the beginning of the next frame follows a Bernoulli distribution with
the parameter
$p=1-(1-\rho)^M$. We then can obtain the probability mass function (PMF) of $W_i$, given by $\mathrm{Pr}\{W_i=xM\}=(1-p)^xp$, and we further have 
\begin{equation}\label{eqw}
\mathbb{E}[W_i]=(1-p)M/p,
\end{equation}
\begin{equation}\label{eqw2}
\mathbb{E}[W_i^2]=(p^2-3p+2)M^2/p^2.
\end{equation} We also define $K_i$ as the time interval from $G_i$ until the end of the frame within which the $i$th status update is received at the AP, i.e., $K_i=t_{i}^{*}-G_i$. We further define $Y_i$ as the time interval between the ends of two frames within which the $(i-1)$th and $i$th status updates are received at the AP, a frame-level inter-departure time, i.e., $Y_i=t_{i}^{*}-t_{i-1}^{*}$. Together with the definitions of $W_i$ and $K_i$, we have $Y_i=W_i+K_i$.}

Let $X_i$ denote the inter-departure time of two successive correctly received status updates at the AP, i.e., $X_i=t_{i}^{'}-t_{i-1}^{'}$.
Let $N_t$ denote the number of status updates that have been successfully received by the AP until time $t$. With reference to  \cite{kaul2012real}, the AAoI can be expressed as 
\begin{equation}
\label{eqa11}
\bar{\Delta}=\lim_{t\rightarrow \infty}\frac{N_t}{t}\frac{1}{N_t}\sum_{i=1}^{N_t}Q_i=\frac{\mathbb{E}[Q_i]}{\mathbb{E}[X_i]},
\end{equation} where $Q_i$ is the polygon area as depicted in Fig. \ref{fig2}. The area of $Q_i$ can be further calculated as
\begin{equation}
\label{eqa12}
Q_i=S_{i-1}+S_{i-1}+1+...+S_{i-1}+X_i-1={\left(X_{i}^2-X_{i}\right)}/{2}+S_{i-1}X_i.
\end{equation} By taking the expectations on both sides of \eqref{eqa12} {and substituting the expectations into \eqref{eqa11}, we have
\begin{equation}
\label{eqa13*}
\bar{\Delta}=\frac{\mathbb{E}[X_i^2]}{2 \mathbb{E}[ X_i]}+\frac{\mathbb{E}[S_{i-1}X_i]}{\mathbb{E}[ X_i]}-\frac{1}{2}.
\end{equation} 
To proceed, we need to evaluate the three expectation terms $\mathbb{E}[ X_i]$, $\mathbb{E}[X_i^2]$ and $\mathbb{E}[S_{i-1}X_i]$.

According to the status generation model and transmission model, the service time of $(i-1)$th received status update at the AP, $S_{i-1}$, consists of time intervals that a successfully received status update has experienced in its generation frame since the generation time slot and the transmitting frame(s) since being scheduled for transmission without being replaced until its reception, denoted by $l_{i-1}$ and $A_{i-1}$, respectively. In addition, we can further split $A_{i-1}$ into multiple parts. Specifically, in the transmitting frame(s), once scheduled for transmission, the successfully received status update may experience several frames without being successfully transmitted until the reception frame within which the user reserves a data slot to update its status to the AP. Denote by $z_{i-1}$, the number of frames that the $(i-1)$th received status update has been scheduled for transmission without being replaced by the late generated status updates. Naturally, we have $Z_{i-1}-M$ as the time interval that the status update spent on transmission before its reception frame, where $Z_{i-1}=z_{i-1}M$. Let $\alpha_{i-1}$ denote the time interval the status update spent in the reception frame until its reception time slot. We have $A_{i-1}=Z_{i-1}-M+\alpha_{i-1}$. Thus, $S_{i-1}=l_{i-1}+Z_{i-1}-M+\alpha_{i-1}$. We note that $l_{i-1}$ and $\alpha_{i-1}$ describe slot-level status generation process within the generation frame, and the status transmission process within the reception frame, respectively; while $Z_{i-1}$ characterizes frame-level status transmission. As such,  $l_{i-1}$ and $\alpha_{i-1}$ as well as $Z_{i-1}$ are independent.


Meanwhile, the inter-departure time between the $(i-1)$th and $i$th receptions of status updates consists of three parts: (1) the remaining time slots in the reception frame of the $(i-1)$th received status update since its reception (i.e., $M-\alpha_{i-1}$); (2) the frames without successful status transmission since the $(i-1)$th reception of status updates (i.e., $Y_{i}-M$); and (3) the time slots in the reception frame of the $i$th received status update until its reception (i.e., $\alpha_{i}$). That is, $X_{i}=M-\alpha_{i-1}+Y_{i}-M+\alpha_{i}$. Recall that $\alpha_{i-1}$ and $\alpha_{i}$ are the time intervals that $(i-1)$th and $i$th received status updates spent in their reception frames until being received by the AP, {and thus they are i.i.d.} Thus, we have $\mathbb{E}[X_i]=\mathbb{E}[Y_i]$.

In addition, $Y_i$ characterizes frame-level status transmission process while $\alpha_{i}$ and $\alpha_{i-1}$ focus on slot-level status transmission process in its reception frame. As such, the frame interval(s) without successful status transmission $Y_{i}-M$ is also independent of $\alpha_{i-1}$ and $\alpha_{i}$. Together with the i.i.d. property of the sequence of random variables $\{\alpha_i\}$, we have the following lemma,
\begin{lemma}\label{l1}
	$\mathbb{E}[X_i^2]=\mathbb{E}[Y_i^2]+2\mathrm{Var}(\alpha)$, where $\mathrm{Var}(\alpha)$ is the variance of random variable $\alpha_i$ ($\alpha_{i-1}$ also). 
\end{lemma}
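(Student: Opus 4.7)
The plan is to start from the decomposition of $X_i$ already derived in the paragraph immediately preceding the lemma, namely $X_i = M - \alpha_{i-1} + Y_i - M + \alpha_i = Y_i + \alpha_i - \alpha_{i-1}$, and then square and take expectations. Expanding gives
\begin{equation*}
X_i^2 = Y_i^2 + (\alpha_i - \alpha_{i-1})^2 + 2 Y_i(\alpha_i - \alpha_{i-1}),
\end{equation*}
so the task reduces to showing that the cross term has zero mean and that $\mathbb{E}[(\alpha_i - \alpha_{i-1})^2] = 2\mathrm{Var}(\alpha)$.

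For the cross term, I would invoke the independence facts explicitly stated just before the lemma: the frame-level quantity $Y_i - M$ (and therefore $Y_i$ itself, since $M$ is deterministic) is independent of the slot-level reception offsets $\alpha_{i-1}$ and $\alpha_i$, and the sequence $\{\alpha_i\}$ is i.i.d. Hence
\begin{equation*}
\mathbb{E}[Y_i(\alpha_i - \alpha_{i-1})] = \mathbb{E}[Y_i]\bigl(\mathbb{E}[\alpha_i] - \mathbb{E}[\alpha_{i-1}]\bigr) = 0.
\end{equation*}
For the remaining term, the i.i.d.\ property of $\{\alpha_i\}$ yields $\mathbb{E}[\alpha_i \alpha_{i-1}] = \mathbb{E}[\alpha]^2$ and $\mathbb{E}[\alpha_i^2] = \mathbb{E}[\alpha_{i-1}^2] = \mathbb{E}[\alpha^2]$, so a direct expansion gives $\mathbb{E}[(\alpha_i - \alpha_{i-1})^2] = 2\mathbb{E}[\alpha^2] - 2\mathbb{E}[\alpha]^2 = 2\mathrm{Var}(\alpha)$. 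Combining these two computations completes the derivation.

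The step I expect to require the most care is the justification that $Y_i$ is independent of both $\alpha_{i-1}$ and $\alpha_i$. The preceding discussion establishes that $Y_i - M$ is independent of each $\alpha$, but one should be explicit that (i) $\alpha_{i-1}$ describes only the intra-reception-frame delay of the $(i-1)$th successful update, whereas $Y_i - M$ counts only whole frames lying strictly between the two reception frames, and (ii) $\alpha_i$ is similarly confined to the intra-reception-frame dynamics of the $i$th successful update, which are determined by reservation outcomes independent of the frames counted by $Y_i - M$. Once that independence is stated cleanly, the rest of the proof is a short algebraic manipulation.
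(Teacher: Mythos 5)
Your proof is correct and follows essentially the same route as the paper's: expand $X_i = Y_i + \alpha_i - \alpha_{i-1}$, kill the cross term using the independence of $Y_i$ from the $\alpha$'s together with their equal means, and reduce $\mathbb{E}[(\alpha_i-\alpha_{i-1})^2]$ to $2\mathrm{Var}(\alpha)$ via the i.i.d.\ property. Your added care in spelling out why $Y_i$ (not just $Y_i - M$) is independent of $\alpha_{i-1}$ and $\alpha_i$ is a harmless refinement of what the paper asserts directly.
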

\begin{proof}
See Appendix \ref{pnl1}.
\end{proof}
We now evaluate the term $\mathbb{E}[S_{i-1}X_i]$. Note that $l_{i-1}$ characterizes slot-level status generation process within the generation frame. Thus, $Y_{i}-M$ and $l_{i-1}$ are also independent. Moreover, $Z_{i-1}$ focuses on the transmission process of the $(i-1)$th received status update, while $Y_i$ is related to that of the $i$th received status update. The transmissions of $(i-1)$th and $i$th received status updates are also independent. As such, $Z_{i-1}$ and $Y_i$ are independent. We then have
\begin{lemma}\label{l2}
	$\mathbb{E}[S_{i-1}X_i]=\mathbb{E}[S_{i-1}]\mathbb{E}[Y_i]-\mathrm{Var}(\alpha)$.
\end{lemma}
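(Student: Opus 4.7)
The plan is to substitute the decompositions $S_{i-1}=l_{i-1}+Z_{i-1}-M+\alpha_{i-1}$ and $X_i=Y_i-\alpha_{i-1}+\alpha_i$ into the product $S_{i-1}X_i$, expand, and then exploit the independence facts collected immediately above the statement of Lemma~\ref{l2}---namely (i) $\{\alpha_j\}$ i.i.d.; (ii) $Y_i-M$ independent of both $\alpha_{i-1}$ and $\alpha_i$; (iii) $l_{i-1},Z_{i-1},\alpha_{i-1}$ mutually independent; and (iv) $l_{i-1}$ independent of $Y_i-M$, and $Z_{i-1}$ independent of $Y_i$---to reduce each cross term to a product of expectations.

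My first step is to split the expansion as
\begin{equation*}
S_{i-1}X_i=(l_{i-1}+Z_{i-1}-M)(Y_i+\alpha_i-\alpha_{i-1})+\alpha_{i-1}(Y_i+\alpha_i-\alpha_{i-1}).
\end{equation*}
For the first product, facts (iii) and (iv) let me factorize its expectation as a product of the two marginal means, and then (i) makes $\mathbb{E}[\alpha_i-\alpha_{i-1}]=0$, so this contribution collapses to $(\mathbb{E}[l_{i-1}]+\mathbb{E}[Z_{i-1}]-M)\,\mathbb{E}[Y_i]$. This is the clean piece.

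The interesting piece is the second product, whose expectation is $\mathbb{E}[\alpha_{i-1}Y_i]+\mathbb{E}[\alpha_{i-1}\alpha_i]-\mathbb{E}[\alpha_{i-1}^2]$. Here (ii) yields $\mathbb{E}[\alpha_{i-1}Y_i]=\mathbb{E}[\alpha]\,\mathbb{E}[Y_i]$, (i) yields $\mathbb{E}[\alpha_{i-1}\alpha_i]=(\mathbb{E}[\alpha])^2$, and the variance identity rewrites $\mathbb{E}[\alpha_{i-1}^2]=\mathrm{Var}(\alpha)+(\mathbb{E}[\alpha])^2$. The two $(\mathbb{E}[\alpha])^2$ terms cancel, leaving $\mathbb{E}[\alpha]\,\mathbb{E}[Y_i]-\mathrm{Var}(\alpha)$. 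Adding the two contributions and recognizing the bracket $\mathbb{E}[l_{i-1}]+\mathbb{E}[Z_{i-1}]-M+\mathbb{E}[\alpha_{i-1}]$ as $\mathbb{E}[S_{i-1}]$ delivers the claim.

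The main obstacle I anticipate is justifying, not merely the pairwise independencies listed above, but the \emph{joint} independence between the triple $(l_{i-1},Z_{i-1},\alpha_{i-1})$ and the pair $(Y_i,\alpha_i)$ that is required to factorize the first bracketed product. Intuitively this should hold because the triple is determined by the generation and transmission of the $(i-1)$th status update up to its reception at time $t_{i-1}^{*}$, while the pair is determined by fresh $\mathrm{Bernoulli}(\rho)$ arrivals and $\mathrm{Bernoulli}(\gamma)$ reservation draws made after $t_{i-1}^{*}$; writing this cleanly would require conditioning on $t_{i-1}^{*}$ and invoking the strong Markov property of the user's frame-indexed status process. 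Once this is settled, the algebraic reduction above is routine.
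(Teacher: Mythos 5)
Your proposal is correct and follows essentially the same route as the paper's proof: substitute the decompositions of $S_{i-1}$ and $X_i$, expand, invoke the stated independence relations to factor the cross terms, and use the i.i.d.\ property of $\{\alpha_j\}$ so that $\mathbb{E}[\alpha_i-\alpha_{i-1}]=0$ and $\mathbb{E}[\alpha_{i-1}\alpha_i]-\mathbb{E}[\alpha_{i-1}^2]=-\mathrm{Var}(\alpha)$; the only difference is a cosmetic regrouping of the expansion. Your closing caveat about needing \emph{joint} rather than pairwise independence is well taken, but the paper's own proof relies on the same unargued joint-independence claims, so you are at the same level of rigor.
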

\begin{proof}
See Appendix \ref{pnl2}.
\end{proof}


By substituting the results in Lemmas \ref{l1} and \ref{l2} into \eqref{eqa13*}, together with $\mathbb{E}[X_i]=\mathbb{E}[Y_i]$, we have 
\begin{equation}
\label{eqa13}
\bar{\Delta}=\frac{\mathbb{E}[Y_i^2]}{2 \mathbb{E}[ Y_i]}+\mathbb{E}[S_{i-1}]-\frac{1}{2}.
\end{equation} Thanks to the i.i.d. property of the two sequences of random variables $\{S_i\}$ and $\{Y_i\}$, we hereafter omit the subscripts of $S_{i-1}$ and $Y_i$, and analyze $\mathbb{E}[S]$, $\mathbb{E}[Y]$ and $\mathbb{E}[Y^2]$ for brevity.}
\subsection{Evaluation of $\mathbb{E}[S]$}\label{seces}
We first calculate the expected service time $\mathbb{E}[S]$. Service time only counts those successfully transmitted status updates. To proceed, we analyze the successful status update probability for user $n$, denoted by $p_s$, once it decides to make a reservation in a frame. Let $I_a^n(k)=1$ denote the successful reception of a status update from user $n$ at the AP within frame $k$. Then we have $p_s=\mathrm{Pr}\left(I_a^n=1|J_n=1, I_n=1\right)$, where we omit the frame indexes of $J_n$, $I_n$ and $I_a^n$, as we consider the converged scenario. Note that the number of users making reservations will affect the successful reservation probability, and the number of users making reservations depends on the number of active users at the beginning of each frame. Let $\mathrm{B}^j_i(\gamma)$ denote the probability that $j$ out of $i$ active users make reservations. Mathematically, we have
\begin{equation}\label{eqr}
\mathrm{B}^j_i(\gamma)=C_i^j(1-\gamma)^{(i-j)}\gamma^j,
\end{equation} according to the considered ALOHA-alike reservation model, where $C_i^j$ is the binomial coefficient. To calculate $p_s$, we need to evaluate the number of active users. 

We note that the number of active users at the beginning of each frame exhibits a Markov property. That is, the number of active users in the current frame depends on how many active users successfully transmit status updates and how many users have new update generation in the previous frame. 
Considering this Markov property, we adopt a discrete-time Markov chain (DTMC) as in \cite{CASARESGINER201915} to model the evolution of the number of active users at the beginning of each frame. The state of the DTMC is the number of active users. Let $P_{ij}$ denote the state transition probability, representing the probability of the system seeing $j$ active users in the next frame given $i$ active users in the current frame. We subsequently analyze the state transition probability to evaluate the steady state of the number of active users.

To proceed, we first calculate the probability that $s$ out of $i$ active users successfully transmit status updates to the AP. Note that the number of users that succeed in reserving a data slot depends on the number of users making reservations. We assume that $j$ out of $i$ active users make reservations, the probability of which is given in \eqref{eqr}. Let $\mathrm{R}^s_j$ denote the probability that among $j$ contending users, $s$ of them succeed in sending their reservation packets in one of the $V$ mini-slots. Recall that when making reservations, each user uniformly selects one of the $V$ mini-slots to send its reservation packet. We notice that the probability of a similar event has been analyzed in  \cite{szpankowski1983analysis}. With reference to \cite[Eq. 6]{szpankowski1983analysis}, we have
\begin{equation}\label{eqr1}
\mathrm{R}^s_j(V)=\frac{(-1)^{s}V!j!}{V^{j}s!}\sum_{m=s}^{\min\{V,j\}}\frac{(-1)^m(V-m)^{j-m}}{(m-s)!(V-m)!(j-m)!}.
\end{equation} 

At the same time, only the first $M-1$ users that make successful reservations can update their statuses in their reserved data slots. We then define $\tilde{\mathrm{R}}^s_j$ as the probability that $s$ out of $j$ contending users successfully reserve a data slot for sending their status updates. We then have 
\begin{equation}\label{eqrs}
\tilde{\mathrm{R}}^s_j(V)=\left\{
\begin{aligned}
&\frac{(-1)^{s}V!j!}{V^{j}s!}\sum_{m=s}^{\min\{V,j\}}\frac{(-1)^m(V-m)^{j-m}}{(m-s)!(V-m)!(j-m)!}, & s\leq M-2, \\
&\sum_{w=s}^{V}\frac{(-1)^{w}V!j!}{V^{j}w!}\sum_{m=w}^{\min\{V,j\}}\frac{(-1)^m(V-m)^{j-m}}{(m-w)!(V-m)!(j-m)!}, & s=M-1.\\
\end{aligned}
\right.
\end{equation} 
The rationale behind \eqref{eqrs} is that when less than $M-1$ users update statuses in the data slots of a frame, those users that succeed in the reservation slot are also those that successfully update statuses. As such,  $\tilde{\mathrm{R}}^s_j=\mathrm{R}^s_j$, when $s\leq M-2$. On the other hand, when $M-1$ users update statuses in the data slots of a frame, there might be more than $M-1$ and up to $V$ users that succeed in the reservation slot. Thus, $\tilde{\mathrm{R}}^s_j=\sum_{w=s}^{V}\mathrm{R}^w_j$, when $s=M-1$. Denote by $\mathrm{D}^s_i(\gamma, V)$ the probability that $s$ out of $i$ active users successfully update statuses in their reserved data slots. According to the law of total probability, we have
\begin{equation}
\mathrm{D}^s_i(\gamma, V)=\sum_{j=s}^{i}\mathrm{B}^j_i(\gamma)\tilde{\mathrm{R}}^s_j(V).
\end{equation} where $\mathrm{B}^j_i(\gamma)$ and $\tilde{\mathrm{R}}^s_j(V)$ are given in \eqref{eqr} and \eqref{eqrs}, respectively.

Recall that both the number of users with successful status transmissions and the number of users with new status arrivals will influence the number of active users of the subsequent frame. The state transition probability $\mathrm{P}_{ij}$ thus, can be expressed as
\begin{equation}\label{eqpij}
\mathrm{P}_{ij}=\sum_{s=\max\{0,i-j\}}^{\min\{i,M-1\}}\mathrm{D}^s_i(\gamma,V)C_{N-i+s}^{j-i+s}\left(1-(1-\rho)^M\right)^{j-i+s}\left((1-\rho)^M\right)^{N-j}
\end{equation} where the term $C_{N-i+s}^{j-i+s}\left(1-(1-\rho)^M\right)^{j-i+s}\left((1-\rho)^M\right)^{N-j}$ is the probability that $(j-i+s)$ out of $(N-i+s)$ users have fresh status updates arriving during frame $k$. The rationale is that the state transition from $i$ active users to $j$ active users results from $s$ among $i$ active users succeeding in transmitting their status updates in frame $k$. Note that the fresh status arrival of $(i-s)$ among $i$ active users without successful transmission will not influence the number of active users in frame $(k+1)$. Hence, the status arrivals of $(N-(i-s))$ users are considered in \eqref{eqpij} and $(j-(i-s))$ out of these users should have new status updates in frame $k$ and become active in frame $(k+1)$, constituting $j$ active users. The value of $s$ ranges from $\max\{0,i-j\}$ to $\min\{i,M-1\}$. The upper bound $\min\{i,M-1\}$ is because there are $i$ active users that might transmit their status updates and at most $M-1$ of them will succeed as each frame only has $M-1$ data slots. The lower bound $\max\{0,i-j\}$ considers the causality. That is, if $j<i$, there should be at least $i-j$ active users becoming inactive in frame  $(k+1)$ (i.e., $i-j$ active users successfully update statuses in frame $k$). Otherwise, there is no least requirement on the number of users to successfully update statuses in frame $k$. 

Let $\pi_i$ denote the steady-state probability of $i$ active users at the beginning of a frame. Based on the state transition probability in \eqref{eqpij}, we can obtain the steady-state distribution $\boldsymbol{\pi}=[\pi_0,\pi_1,...,\pi_N]^T$ by solving the linear equations $\boldsymbol{\pi}=\boldsymbol{\pi}\mathbf{ P}$ together with the fact $\sum_{i=0}^{N} \pi_i=1$. We then rely on the steady-state distribution $\boldsymbol{\pi}$ to calculate the successful transmission probability of an active user making reservations, i.e., $p_s$. The following lemma gives the analytical expression of $p_s$,
\begin{lemma}\label{l3}
$p_s=\sum_{n_1=0}^{N-1}\frac{\pi_{n_1+1} (n_1+1)}{\sum_{j=0}^{N-1}\pi_{j+1}(j+1) }\sum_{n_2=0}^{n_1}\mathrm{B}_{n_1}^{n_2}(\gamma)\sum_{n_3=1}^{\min\{V,n_2+1\}}\mathrm{R}^{n_3}_{n_2+1}(V)\frac{\min\{{n_3},M-1\}}{n_2+1}$.
\end{lemma}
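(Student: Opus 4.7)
The plan is to compute $p_s$ by conditioning on a tagged active user that has already decided to make a reservation in the current frame, and then peeling off four layers of randomness: (a) how many \emph{other} active users occupy the same frame, (b) how many of those others also make reservations, (c) how many of the resulting $n_2+1$ reservation packets succeed in the mini-slot contention, and (d) whether the tagged user is among the $\min\{n_3, M-1\}$ successful contenders that actually receive a data-slot assignment.

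For layer (a), since $\pi_i$ is the stationary probability that a frame begins with $i$ active users, the conditional distribution of the number of \emph{other} active users seen by the tagged user (given that it is itself active) requires size-biasing: in a frame with $n_1+1$ active users, each of them is equally likely to be the tagged one, yielding the Palm-type weight $\frac{\pi_{n_1+1}(n_1+1)}{\sum_{j=0}^{N-1}\pi_{j+1}(j+1)}$. For layer (b), the $n_1$ other active users each independently decide to reserve with probability $\gamma$, giving the Bernoulli factor $\mathrm{B}_{n_1}^{n_2}(\gamma)$; together with the tagged user this produces exactly $n_2+1$ reservation packets in the current frame.

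For layer (c), each of the $n_2+1$ packets is placed uniformly and independently among the $V$ mini-slots, and the probability that exactly $n_3$ mini-slots are successful singletons is $\mathrm{R}^{n_3}_{n_2+1}(V)$ as given in \eqref{eqr1}. For layer (d), by the exchangeability of the $n_2+1$ reserving users (they all follow the same uniform rule and, symmetrically, occupy a uniformly random position in the feedback order), the tagged user is equally likely to be any one of the $n_2+1$ contenders; hence it lies in the singleton set with probability $n_3/(n_2+1)$, and among those singletons only the first $\min\{n_3, M-1\}$ in the feedback order are granted data slots. By a second application of symmetry the tagged user therefore becomes a data-slot holder with probability $\min\{n_3, M-1\}/(n_2+1)$. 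Multiplying the four factors and summing $n_1$ over $\{0,\dots,N-1\}$, $n_2$ over $\{0,\dots,n_1\}$, and $n_3$ over $\{1,\dots,\min\{V,n_2+1\}\}$ (the success range) reproduces the claimed expression.

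The main obstacle will be rigorously justifying the size-biased weighting in layer (a): one must argue that, in the stationary regime of the DTMC built from \eqref{eqpij}, conditioning on the event that a \emph{specific} user is active at the start of a generic frame converts the marginal $\pi_i$ into the weighted distribution proportional to $i\,\pi_i$, and that conditional on this event the reservation decisions of the remaining $n_1$ actives remain i.i.d.\ Bernoulli$(\gamma)$. Both facts follow from the symmetry of the $N$-user model and the fact that the reservation coins $J_n$ are drawn independently of the active-user indicators $I_n$; once they are in hand, the subsequent layers reduce to straightforward uniform sampling and tie-breaking by symmetry.
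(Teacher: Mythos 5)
Your proposal is correct and follows essentially the same route as the paper's proof: the size-biased (Palm-type) reweighting of $\pi_{n_1+1}$ by $n_1+1$ for the tagged active user, the binomial thinning of the other actives by $\gamma$, the singleton-count distribution $\mathrm{R}^{n_3}_{n_2+1}(V)$, and the exchangeability argument giving the final factor $\min\{n_3,M-1\}/(n_2+1)$ all appear in the same form in the paper, which likewise notes that the independence of the reservation coins from the activity indicators makes the extra conditioning on $J_n=1$ harmless.
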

\begin{proof}
	See Appendix \ref{pl1}.
\end{proof}

{{The following corollary characterizes the probability that one user successfully transmits its status update in the reserved $(\alpha-1)$th data slot (i.e., $\alpha$th slot of a frame), denoted by $\varphi_\alpha$, where $\alpha\in\{2,...,M\}$ and $\sum_{\alpha=2}^M\varphi_{\alpha}=p_s$. The proof is omitted as it can be directly inferred from Lemma \ref{l3}.
\begin{corollary}\label{co11}
	$\varphi_\alpha=\sum_{n_1=0}^{N-1}\frac{\pi_{(n_1+1)} (n_1+1)}{\sum_{j=0}^{N-1}\pi_{(j+1)}(j+1) }\sum_{n_2=0}^{n_1}\mathrm{B}_{n_1}^{n_2}(\gamma)\sum_{n_3=\alpha-1}^{\min\{V,n_2+1\}}\mathrm{R}^{n_3}_{n_2+1}(V)\frac{1}{n_2+1}$.
\end{corollary}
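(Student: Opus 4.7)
The plan is to mirror the derivation of Lemma~\ref{l3} step by step and identify exactly one factor that must change when we pin down the target data slot rather than count any successful slot. By the same size-biased argument used in Lemma~\ref{l3}, conditioned on the tagged user being active in a given frame, the probability that it shares the frame with $n_1$ other active users is $\pi_{n_1+1}(n_1+1)/\sum_{j=0}^{N-1}\pi_{j+1}(j+1)$. Then, conditioning further on the tagged user making a reservation, the probability that exactly $n_2$ of the remaining $n_1$ active users also reserve is $\mathrm{B}_{n_1}^{n_2}(\gamma)$, so the total number of contending reservers is $n_2+1$. Finally, the probability that exactly $n_3$ of the $V$ mini-slots carry a lone reservation packet is $\mathrm{R}^{n_3}_{n_2+1}(V)$, exactly as in Lemma~\ref{l3}.

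The only step I would change is the innermost factor. In Lemma~\ref{l3}, the tagged user is deemed successful whenever it occupies one of the first $\min\{n_3,M-1\}$ successful mini-slots, giving the ratio $\min\{n_3,M-1\}/(n_2+1)$. For Corollary~\ref{co11} I want the probability that it is specifically the $(\alpha-1)$th successful reserver, since the AP allocates the $(\alpha-1)$th data slot to whichever user holds the $(\alpha-1)$th-ranked successful mini-slot. Because the $n_2+1$ reservers independently and uniformly pick among the $V$ mini-slots, their choices are exchangeable; conditional on which $n_3$ mini-slots contain singleton packets, each reserver is equally likely to occupy any one of those $n_3$ ordered positions. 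Hence the tagged user sits at the $(\alpha-1)$th successful mini-slot with probability $1/(n_2+1)$, provided $n_3\geq \alpha-1$; otherwise the event is impossible. This is precisely what determines both the replacement factor $1/(n_2+1)$ and the lower summation limit $n_3=\alpha-1$ in the claimed expression.

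Multiplying these conditional probabilities and summing via the law of total probability yields the formula in Corollary~\ref{co11}. As a sanity check, summing $\varphi_\alpha$ over $\alpha=2,\dots,M$ transforms the innermost factor into $\sum_{\alpha-1=1}^{\min\{n_3,M-1\}} 1/(n_2+1)=\min\{n_3,M-1\}/(n_2+1)$, which collapses the expression back to $p_s$ in Lemma~\ref{l3}, consistent with the stated identity $\sum_{\alpha=2}^{M}\varphi_\alpha=p_s$. The main (and only nontrivial) obstacle is articulating the exchangeability argument that replaces $\min\{n_3,M-1\}/(n_2+1)$ by $1/(n_2+1)$; once this symmetry is granted, everything else is a verbatim reuse of the chain of conditional probabilities established in Lemma~\ref{l3}, which is why the authors present Corollary~\ref{co11} without a separate proof.
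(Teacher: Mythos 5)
Your proposal is correct and follows exactly the route the paper intends: it reuses the conditional-probability chain from Lemma~\ref{l3} and only replaces the final factor via the exchangeability argument, which rightly gives probability $1/(n_2+1)$ of the tagged user occupying the $(\alpha-1)$th successful mini-slot whenever $n_3\geq\alpha-1$. The paper omits this proof as directly inferable from Lemma~\ref{l3}, and your write-up (including the consistency check $\sum_{\alpha=2}^{M}\varphi_\alpha=p_s$) supplies precisely the omitted details.
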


Recall that $S=l+Z-M+\alpha$, and thus $\mathbb{E}[S]=\mathbb{E}[l]+\mathbb{E}[Z]-M+\mathbb{E}[\alpha]$. 
Due to the independent generation of status updates in each slot, we have $l\in\{1,2,...,M\}$ and the probability that a status update is generated $l$ slots before its transmission frame without preemption is given by $\phi_{l}=\rho(1-\rho)^{l-1}$, which is the product between the probability of one status generation at the $(M-l)$th slot of a frame and the probability of no status generation in the following $l-1$ consecutive slots. We then have
\begin{equation}\label{el}
\mathbb{E}[l]=\frac{\sum_{l=1}^{M}\phi_{l}l}{\sum_{l=1}^{M}\phi_{l}}=\frac{1}{\rho}-\frac{M(1-\rho)^M}{1-(1-\rho)^M}.
\end{equation}
For $\mathbb{E}[Z]$, recall that $Z=zM$, where $z$ is the number of frames that the received status update spent on transmission since its generation frame. We then have $\mathrm{Pr}(z=i)=\left((1-\rho)^M\right)^{i-1}(1-\gamma p_s)^{i-1}\gamma p_s$, where $\left((1-\rho)^M\right)^{i-1}$ represents the probability that no fresh status update is generated to replace (preempt) the current status update during transmission before its reception frame, $(1-\gamma p_s)^{i-1}$ is the probability that the status update is not received by the AP in previous $(i-1)$ frames, and $\gamma p_s$ is the successful probability of the status update being received in the $i$th frame since its generation frame. Thus, we have 
\begin{equation}
\mathbb{E}[Z]=M\mathbb{E}[z]=M\frac{\sum_{i=1}^{\infty}\mathrm{Pr}(z=i)i}{\sum_{i=1}^{\infty}\mathrm{Pr}(z=i)}=\frac{M}{1-(1-\gamma p_s)(1-\rho)^M}.
\end{equation}

As for $\mathbb{E}[\alpha]$, based on Corollary \ref{co11}, we have 
$\mathbb{E}[\alpha]={\sum_{\alpha=2}^{M}\varphi_{\alpha}\alpha}/{p_s}$. In this regard, $\mathbb{E}[S]$ can be expressed as 
\begin{equation}\label{es1}
\mathbb{E}[S]=\frac{1}{\rho}-\frac{M(1-\rho)^M}{1-(1-\rho)^M}+\frac{\sum_{\alpha=2}^{M}\varphi_{\alpha}\alpha}{p_s}+\frac{M}{1-(1-\gamma p_s)(1-\rho)^M}-M.
\end{equation}}
\subsection{Evaluation of $\mathbb{E}[Y]$ and $\mathbb{E}[Y^2]$}\label{eyey2}
We note that $W$ and $K$ are independent because $W$ only depends on status arrival rate $\rho$. Recall that $Y=W+K$, we thus have}
\begin{equation}
\label{aeq10}
\mathbb{E}[ Y]=\mathbb{E}[W]+\mathbb{E}[K],
\end{equation}
\begin{equation}
\label{aeq11}
\mathbb{E}[Y^2]=\mathbb{E}[(W+K)^2]=\mathbb{E}[W^2]+\mathbb{E}[K^2]+2\mathbb{E}[W]\mathbb{E}[K],
\end{equation} where $\mathbb{E}[W]$ and $\mathbb{E}[W^2]$ are given in \eqref{eqw} and \eqref{eqw2}.

As for variable $K$, we note that when there is a status update for transmission at the beginning of one frame, the probability of successfully transmitting the status update within this frame is $\gamma p_s$. Furthermore, if the status update is not received by the AP, the transmission process will continue until a successful status transmission as we allow multiple reservation attempts for each status update. It is worth mentioning that although the fresh status updates generated before the reception frame will replace the stale status update for transmission, the transmission process will not be affected. Thus, the PMF of $K$ can be expressed as $\mathrm{Pr}\{K=xM\}=(1-\gamma p_s)^{x-1}\gamma p_s$, where $x\in\{1,2,...\}$. Thus, we have
\begin{equation}\label{aek}
\mathbb{E}[K]=\frac{M}{\gamma p_s},
\end{equation}
\begin{equation}\label{aek2}
\mathbb{E}[K^2]=\frac{M^2}{(\gamma p_s)^2}+\frac{M^2(1-\gamma p_s)}{(\gamma p_s)^2}.
\end{equation}
Regarding the expected value of $Y^2$, by substituting \eqref{eqw}-\eqref{eqw2} and \eqref{aek}-\eqref{aek2} into \eqref{aeq11}, we have $\mathbb{E}[Y^2]$ as follows,
\begin{equation}
\mathbb{E}[Y^2]=M^2-\frac{3M^2}{\gamma p_s}+\frac{2M^2}{(\gamma p_s)^2}+\frac{2M^2}{\gamma p_s \left(1-(1-\rho)^M\right)}-\frac{3M^2}{1-(1-\rho)^M}+\frac{2M^2}{\left(1-(1-\rho)^M\right)^2}
\end{equation}
Based on $\mathbb{E}[Y^2]$, $\mathbb{E}[Y]$ and $\mathbb{E}[S]$, after some manipulations, we obtain the analytical expression of AAoI for FSA-RD, given by
\begin{equation}\label{agen3}
\bar{\Delta}=\frac{M}{\gamma p_s}-\frac{M}{2}+\frac{1}{\rho}+\frac{\sum_{\alpha=2}^{M}\varphi_{\alpha}\alpha}{p_s}-\frac{1}{2}.
\end{equation}
where $p_s$ and $\varphi_{\alpha}$ are given in Lemma \ref{l3} and Corollary \ref{co11}, respectively. 

For the considered FSA-RD protocol, there are two parameters that need to be designed, reservation probability $\gamma$ and frame size $M$. It is noted that the steady-state distribution of the formulated DTMC in FSA-RD can only be calculated numerically due to its complicated evolution. Thus, with the AAoI expression in \eqref{agen3}, we have to apply the exhaustive search method to optimize $\gamma$ and $M$. {As $\gamma\in(0,1]$ and $M\in\{2,3,..,V+1\}$, the complexity of the exhaustive search is acceptable. Nevertheless, the system design insights offered by such a method are limited. This motivates us to study a simplified version of FSA-RD in the next section.}

\section{FSA-RD with One Reservation Attempt per Status Update}
In this section, we consider a simplified scheme, FSA-RD with one reservation attempt per status update (FSA-RD-One). That is, a newly generated status update can only be transmitted in the subsequent frame\footnote{We note that similar restrictions have been applied in the design of AoI-oriented random access protocols, see e.g., \cite{modernra}.}. More specifically, only the status update most recently generated during a frame will be kept for transmission in the subsequent frame. No matter whether the transmission of the status update succeeds or not, the status update will be discarded at the end of the transmission frame. As such, in any frame $k$, whether a user has a status update to transmit also follows a Bernoulli distribution. The corresponding probability equals the probability that there is at least one status update generated in the previous $M$ slot of frame $k-1$. 

{In the above context, whether a user is active at the beginning of each frame becomes i.i.d., and thus the distribution of the number of active users is now frame-independent.} We then have $\mathrm{Pr}(I_n(k)=1)=1-(1-\rho)^M$, and the number of active users, i.e., $\sum_{n=0}^{N}I_n$ in each frame is i.i.d. 
We remark that when the status update generation probability $\rho$ is high, users under FSA-RD-One will behave similar to that under FSA-RD.  {Specifically, as $\rho$ goes large, $\mathrm{Pr}(I_n(k)=1)$ increases, meaning that users in each frame are very likely to have a newly generated status update in the previous frame for transmission. In this case, users under FSA-RD-One or FSA-RD tend to transmit the same status update.
In this sense, the performance of FSA-RD-One serves as a good  approximation to that of FSA-RD, {especially when the term $1-(1-\rho)^M$ becomes relatively large as $\rho$ increases.}} Also, $z_{i-1}=1$ as each status update has at most one reservation attempt. Hence, $S_{i-1}=l_{i-1}+\alpha_{i-1}$ and Lemmas \ref{l1} and \ref{l2} still hold for this setup. We now focus on three terms in \eqref{eqa13} to derive the closed-form expression of AAoI for this simplified scheme.
\subsection{ Evaluation of $\mathbb{E}[S]$}
As in Sec \ref{seces}, we first analyze the successful transmission probability of an active user when making a reservation under the FSA-RD-One scheme, denoted by $p_s^{o}$. As we consider a symmetric network, $p_s^{o}$ is the same for all users due to the symmetric network considered. Thus, we focus on analyzing $p_s^{o}$ for any arbitrary user $n$. Note that as there is no retransmission, the number of active users depends on the status arrival process only. As such, the number of active users excluding user $n$ with a status update follows a Binomial distribution, and the probability that $n_1$ of $N-1$ users are active can be represented by $\mathrm{B}_{N-1}^{n_1}(1-(1-\rho)^M)$ as in \eqref{eqr}. We then have the following lemma that gives a closed-form expression of $p_s^o$,
\begin{lemma}\label{lem1}
	$p_s^o=\sum_{n_1=0}^{N-1}\sum_{n_2=0}^{n_1}\sum_{n_3=1}^{\min\{V,n_2+1\}}\mathrm{B}_{N-1}^{n_1}(1-(1-\rho)^M)\mathrm{B}_{n_1}^{n_2}(\gamma)\mathrm{R}^{n_3}_{n_2+1}(V)\frac{\min\{{n_3},M-1\}}{n_2+1}$.
\end{lemma}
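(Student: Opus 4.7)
The plan is to derive $p_s^o$ by conditioning sequentially on the three random quantities that determine whether a tagged active user succeeds, then invoking symmetry to handle the final step. By the symmetry of the network, it suffices to tag an arbitrary user $n$ who is active and has decided to make a reservation, and compute the probability that its status update is successfully received at the AP.

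First I would exploit the fact that, under FSA-RD-One, the activity indicators $I_n(k)$ are mutually independent across users (and frames), each Bernoulli with parameter $1-(1-\rho)^M$. Consequently the number of \emph{other} active users in the tagged user's frame is $\mathrm{Binomial}(N-1,\,1-(1-\rho)^M)$, giving the factor $\mathrm{B}_{N-1}^{n_1}(1-(1-\rho)^M)$ after conditioning on there being $n_1$ such users. Next, given $n_1$ other active users, each independently chooses to reserve with probability $\gamma$, so the number $n_2$ that actually contend is $\mathrm{Binomial}(n_1,\gamma)$, contributing $\mathrm{B}_{n_1}^{n_2}(\gamma)$. Together with the tagged user there are $n_2+1$ contenders, each selecting a mini-slot uniformly at random from the $V$ available ones. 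Using the identity already recalled from \cite{szpankowski1983analysis}, the probability that exactly $n_3$ of these $n_2+1$ contenders land alone in a mini-slot is $\mathrm{R}_{n_2+1}^{n_3}(V)$.

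The key remaining step—and the one I expect to be the main obstacle because it is the only nontrivial combinatorial argument—is to compute the conditional probability that the tagged user is one of the at most $M-1$ ultimately successful transmitters, given the event that $n_3$ out of $n_2+1$ contenders succeeded in the reservation slot. I would argue this in two symmetry steps. First, because all $n_2+1$ contenders play symmetric roles (same reservation probability, same uniform mini-slot selection), each is equally likely to be among the $n_3$ successful reservers, so the tagged user is a successful reserver with conditional probability $n_3/(n_2+1)$. Second, among the $n_3$ successful reservers, only those whose occupied mini-slot is one of the first $M-1$ occupied mini-slots obtain a data slot; by symmetry of the uniform mini-slot choice, each successful reserver is equally likely to occupy any of the $n_3$ ordered positions, so the conditional probability that the tagged user (given it is a successful reserver) is among the first $M-1$ is $\min\{n_3,M-1\}/n_3$. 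Multiplying the two conditional probabilities yields $\min\{n_3,M-1\}/(n_2+1)$.

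Finally, I would apply the law of total probability, summing over $n_1\in\{0,\ldots,N-1\}$, $n_2\in\{0,\ldots,n_1\}$, and $n_3\in\{1,\ldots,\min\{V,n_2+1\}\}$ (the lower bound $n_3\ge 1$ is enforced by requiring the tagged user itself to succeed in the reservation slot). Combining the four factors identified above reproduces exactly the claimed closed-form expression for $p_s^o$. Compared with the proof of Lemma \ref{l3}, the simplification here is that no steady-state distribution $\boldsymbol\pi$ or size-biased weighting $\pi_{n_1+1}(n_1+1)/\sum_j \pi_{j+1}(j+1)$ is needed, precisely because activity is i.i.d.\ across frames under FSA-RD-One.
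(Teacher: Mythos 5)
Your proposal is correct and follows essentially the same route as the paper's proof: condition on the number of other active users, the number of them that reserve, and the number of singleton mini-slots, then use exchangeability of the $n_2+1$ contenders to get the factor $\min\{n_3,M-1\}/(n_2+1)$, and finish with the law of total probability. Your two-step symmetry argument (first $n_3/(n_2+1)$ for being a successful reserver, then $\min\{n_3,M-1\}/n_3$ for being among the first $M-1$ occupied mini-slots) is a slightly more explicit justification of the final factor that the paper asserts in one line, but it is the same decomposition.
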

\begin{proof}
	See Appendix \ref{pl4}.
\end{proof}
We then can derive the corresponding probability that one user successfully transmits its status update in the reserved $(\alpha-1)$th data slot, denoted by $\varphi_{\alpha}^o$, where $\alpha\in\{2,...,M\}$ and $\sum_{\alpha=2}^M\varphi_{\alpha}^o=p_s^o$, given in the following corollary. The proof is also omitted, as it can be directly inferred from Lemma \ref{lem1}.
\begin{corollary}\label{co1}		$\varphi_\alpha^o=\sum_{n_1=0}^{N-1}\sum_{n_2=0}^{n_1}\sum_{n_3=\alpha-1}^{\min\{V,n_2+1\}}\mathrm{B}_{N-1}^{n_1}(1-(1-\rho)^M)\mathrm{B}_{n_1}^{n_2}(\gamma)\sum_{n_3=1}^{\min\{V,n_2+1\}}\mathrm{R}^{n_3}_{n_2+1}(V)\frac{1}{n_2+1}$.
\end{corollary}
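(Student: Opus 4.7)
My plan is to piggyback on the derivation of Lemma \ref{lem1} and only modify the last factor, which tracks exactly where the tagged user lands among the successful reservers. First I would condition on the same three random quantities used in the proof of Lemma \ref{lem1}: the number $n_1$ of other active users at the start of the frame (Binomial with parameter $1-(1-\rho)^M$ by the i.i.d.\ frame-independence of FSA-RD-One), the number $n_2$ of those $n_1$ who also choose to reserve (Binomial with parameter $\gamma$), and the number $n_3$ of contenders (among the $n_2+1$ reservers including the tagged user) whose reservation packet is alone in some mini-slot (governed by $\mathrm{R}^{n_3}_{n_2+1}(V)$). The joint probability of these three events factors exactly as $\mathrm{B}_{N-1}^{n_1}(1-(1-\rho)^M)\,\mathrm{B}_{n_1}^{n_2}(\gamma)\,\mathrm{R}^{n_3}_{n_2+1}(V)$, as in Lemma \ref{lem1}.

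The part that needs to be refined is the conditional probability that the tagged user obtains the $(\alpha-1)$th data slot specifically, rather than just any data slot. Because each of the $n_2+1$ contenders picks a mini-slot uniformly and independently, the contenders are exchangeable; hence, conditional on $n_3$ of them succeeding, the identity of the successful reservers is a uniformly random size-$n_3$ subset of the $n_2+1$ contenders, and their rank order by mini-slot index is also uniform. Thus, for any fixed rank $k\in\{1,\dots,n_3\}$, the probability that the tagged user is successful and occupies rank $k$ equals $\frac{n_3}{n_2+1}\cdot\frac{1}{n_3}=\frac{1}{n_2+1}$. Setting $k=\alpha-1$, this probability equals $\frac{1}{n_2+1}$ whenever $n_3\ge \alpha-1$ and equals $0$ otherwise. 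This explains two things simultaneously: the lower limit $n_3=\alpha-1$ on the innermost sum, and the replacement of the factor $\min\{n_3,M-1\}/(n_2+1)$ appearing in Lemma \ref{lem1} by the single factor $1/(n_2+1)$. Note also that since $\alpha\le M$ we always have $\alpha-1\le M-1$, so the first-$M-1$-successful-reservers cutoff imposes no extra restriction.

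Assembling these factors via the law of total probability delivers the corollary. As a consistency check I would sum $\varphi_\alpha^o$ over $\alpha=2,\dots,M$: the contribution $1/(n_2+1)$ is collected once for each $\alpha\in\{2,\dots,\min\{n_3,M-1\}+1\}$, which reproduces the $\min\{n_3,M-1\}/(n_2+1)$ factor of Lemma \ref{lem1} and hence $p_s^o$. The only nontrivial step is the exchangeability argument that yields the $1/(n_2+1)$ conditional probability, but this follows immediately from the uniform mini-slot selection rule, so I do not expect any real obstacle beyond careful bookkeeping on the summation limits.
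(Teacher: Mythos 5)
Your argument is correct and is exactly the inference the paper intends when it says the corollary "can be directly inferred from Lemma \ref{lem1}": you keep the factorization $\mathrm{B}_{N-1}^{n_1}(\cdot)\mathrm{B}_{n_1}^{n_2}(\gamma)\mathrm{R}^{n_3}_{n_2+1}(V)$ and replace the final factor $\min\{n_3,M-1\}/(n_2+1)$ by the per-rank probability $1/(n_2+1)$ via the exchangeability of the $n_2+1$ contenders, with the summation over $\alpha$ recovering $p_s^o$ as a sanity check. Note only that the printed statement contains a typographical duplication of the sum over $n_3$; your reading of it as a single sum from $n_3=\alpha-1$ to $\min\{V,n_2+1\}$ (matching the structure of Corollary \ref{co11}) is the correct one.
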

Recall that $S=l+\alpha$, and thus $\mathbb{E}[S]=\mathbb{E}[l]+\mathbb{E}[\alpha]$. 
Owing to the same generation process of status updates as for FSA-RD, $\mathbb{E}[l]$ in \eqref{el} still holds. Thus, we have 
	    \begin{equation}\label{es}
		\mathbb{E}[S]=\frac{1}{\rho}-\frac{M(1-\rho)^M}{1-(1-\rho)^M}+\frac{\sum_{\alpha=2}^{M}\varphi_{\alpha}^o\alpha}{p_s^o}.
		\end{equation}
\subsection{Evaluation of $\mathbb{E}[Y]$ and $\mathbb{E}[Y^2]$}
We note that it is not easy to directly calculate the distribution of $K$. Inspired by \cite{gu2019timely,chen2016age}, we apply the recursive method to calculate the expectations of $K$ and $K^2$. Then, we make use of the relationship among $K$, $W$ and $Y$ in \eqref{aeq10} and \eqref{aeq11} to obtain $\mathbb{E}[Y]$ and $\mathbb{E}[Y^2]$. The term $K$ has two different behaviors depending on whether the status update is successfully received by the AP or not: \textbf{1)} If the status update is successfully received in its transmission frame, $K$ equals $M$. This event happens with a probability of $\gamma p_s^o$; \textbf{2)} If not, the user needs to wait for the generation of a new status update as retransmission is not allowed. Then, $K=M+W^{'}+K^{'}$. The corresponding probability is $1-\gamma p_s^o$. Here, $W^{'}$ denotes the waiting time of a new status available for transmission, and $K^{'}$ is the remaining frame(s) to successfully transmit a status update to the AP. Multiple status updates may be generated and discarded during this process. {We notice that $\mathbb{E}[K]=\mathbb{E}[K^{'}]$ and $\mathbb{E}[K^2]=\mathbb{E}[{K^{'}}^2]$  due to the same evolution, and $\mathbb{E}[W]=\mathbb{E}[W^{'}]$ and $\mathbb{E}[W^2]=\mathbb{E}[{W^{'}}^2]$  due to the i.i.d. process. {Besides, $W'$ and $K'$ are independent according to the definition.} As such, $\mathbb{E}[K]$ can be calculated as follows,
\begin{equation}
\label{ek}
\mathbb{E}[K]=\gamma p_s^oM+(1-\gamma p_s^o)(M+\mathbb{E}[W]+\mathbb{E}[K]),
\end{equation}where the first term refers to the first case of successful transmission for the status update and the second term represents the second case that the user waits for a new status update for transmission and starts to transmit the new one. Similarly, $K^2$ equals $M^2$ with probability $\gamma p_s^o$ and equals $(M+W'+K')^2$ with probability $1-\gamma p_s^o$. Thanks to the independence between $W'$ and $K'$, we have  
\begin{equation}\label{mek2}
\mathbb{E}[K^2]=\gamma p_s^oM^2+(1-\gamma p_s^o)\left( M^2 + \mathbb{E}[K^2] + \mathbb{E}[W^2]\right)+
(1-\gamma p_s^o)\left(2\mathbb{E}[K]\mathbb{E}[W]+2M\left(\mathbb{E}[K]+\mathbb{E}[W]\right)\right).
\end{equation}}By substituting $\mathbb{E}[W]$ and $\mathbb{E}[W^2]$ from \eqref{eqw} and \eqref{eqw2} and the above expressions of $\mathbb{E}[K]$ and $\mathbb{E}[K^2]$ into \eqref{aeq10} and \eqref{aeq11}, the expectation of frame-level inter-departure time of two successive correctly transmitted status updates, $\mathbb{E}[Y]$, and the expectation of the square of frame-level inter-departure time, $\mathbb{E}[Y^2]$, can be obtained as follows
\begin{equation}
\label{ey}
\mathbb{E}[Y]=\mathbb{E}[W]+\mathbb{E}[K]=\frac{M}{\gamma p_s^o}+\frac{M(1-\rho)^M}{\gamma p_s^o\left(1-(1-\rho)^M\right)}.
\end{equation} 
\begin{equation}
\mathbb{E}[Y^2] = \frac{M^2}{\gamma p_s^o}+\frac{2M^2\!\!\left( \frac{1}{\gamma p_s^o}-\left(1-(1-\rho)^M\right) \right)}{\gamma p_s^o\left(1-(1-\rho)^M\right)^2}+\frac{M^2(1-\rho)^M}{\gamma p_s^o(1-(1-\rho)^M)}
\end{equation}
Based on $\mathbb{E}[Y^2]$, $\mathbb{E}[Y]$ and $\mathbb{E}[S]$, we obtain a closed-form expression of the AAoI for FSA-RD-One, given by
\begin{equation}\label{age}
\bar{\Delta}^o\!\!=\!\!\frac{M}{\gamma p_s^o\left(1-(1-\rho)^M\right)}-\frac{M(1-\rho)^M}{1-(1-\rho)^M}+\frac{1}{\rho}-\frac{M+1}{2}+\frac{\sum_{\alpha=2}^{M}\varphi_{\alpha}^o\alpha}{p_s^o},
\end{equation} where $p_s^o$ and $\varphi_{\alpha}^o$ are given in Lemma \ref{lem1} and Corollary \ref{co1}, respectively. {Note that when $\rho=1$, the above AAoI expression of FSA-RD-One and that of FSA-RD given in \eqref{agen3} will reduce to the same expression, which verifies the correctness of our analysis and confirms our deduction on the relationship between FSA-RD-One and FSA-RD.}

According to the closed-form expression in \eqref{age}, it is still non-trivial to attain the optimal system parameters. We then think of making an approximation for ease of parameter optimization. Note that $\alpha$ takes value from $\{2,...,M\}$. Thus, we have $\mathbb{E}[\alpha]\leq M$. As such, we have the following upper bound of AAoI,
\begin{equation}\label{ageu}
\bar{\Delta}^U=\frac{M}{\gamma p_s^o\left(1-(1-\rho)^M\right)}-\frac{M(1-\rho)^M}{1-(1-\rho)^M}+\frac{1}{\rho}+\frac{M-1}{2}.
\end{equation} We note that the gap between $\bar{\Delta}^U$ and $\bar{\Delta}^o$ is bounded by $M$.
 
Observing the upper bound expression of AAoI, we can find that given the status arrival rate $\rho$, the number of mini-slots in reservation slot $V$, and a fixed frame size $M$, $\bar{\Delta}^U$ decreases as $\gamma p_s^o$ increases, where $p_s^o$ is the successful transmission probability when making reservations. As such, finding the optimal reservation probability that minimizes the AAoI upper $\bar{\Delta}^U$ is equivalent to finding that maximizes $\gamma p_s^o$, when other system parameters are given. 

Due to the complex expression of $p_s^o$ given in Lemma \ref{lem1}, it is difficult to attain the optimal $\gamma$ by calculating the zero root of the first-order derivative of $p_s^o \gamma$. To proceed, we seek to find a bound for the term $p_s^o \gamma$. Along this line, we realize that $p_s^o \gamma$ is actually the probability that an active user successfully transmits its status update to the AP. Recall that the successful status update of an active user has two conditions: 1) the user sends a reservation packet to one of the $V$ mini-slots and no other users send reservation packets to the same mini-slot; 2) the reserved collision-free mini-slot is one of the first $M-1$ reserved mini-slots without collision. 
Let $\widetilde{p}_s$ denote the probability of reserving a collision-free mini-slot once a user makes a reservation. Then the probability that an active user reserves a collision-free mini-slot can be given by $\widetilde{p}_s\gamma$, which is exactly the probability of the above-mentioned condition 1). In this sense, we can deduce that $p_s^o\gamma \le\widetilde{p}_s\gamma$ (i.e., $p_s^o \le\widetilde{p}_s$). To make it rigorous, we next characterize the relationship between $p_s^o$ and $\widetilde{p}_s$ mathematically. To proceed, we first drive an expression of $\widetilde{p}_s$ given by
\begin{corollary}\label{col3}
    $\widetilde{p}_s{=}\sum_{n_1=0}^{N-1}\sum_{n_2=0}^{n_1}\sum_{n_3=1}^{\min\{V,n_2+1\}}\mathrm{B}_{N-1}^{n_1}(1-(1-\rho)^M)\mathrm{B}_{n_1}^{n_2}(\gamma)\mathrm{R}^{n_3}_{n_2+1}(V)\frac{n_3}{n_2+1}$.
\end{corollary}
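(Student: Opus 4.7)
The plan is to mirror the derivation of $p_s^o$ in Lemma~\ref{lem1}, using the law of total probability by conditioning on the joint configuration of contenders in the reservation slot and then replacing only the last factor—the one that accounts for which reservations ultimately secure a data slot—by the weaker requirement that the tagged user simply lands in a collision-free mini-slot. Because $\widetilde{p}_s$ is defined conditionally on the tagged user having made a reservation, I would treat this user as already contending in the current frame and peel off the conditioning step by step.

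First, I would condition on $n_1$, the number of active users among the remaining $N-1$ users. Under FSA-RD-One each of these users is active independently with probability $1-(1-\rho)^M$, so the count is Binomial and contributes the factor $\mathrm{B}_{N-1}^{n_1}(1-(1-\rho)^M)$. Then I would condition on $n_2$, the number of those $n_1$ users that also choose to make a reservation; by the ALOHA-alike rule this is Binomial with parameter $\gamma$, contributing $\mathrm{B}_{n_1}^{n_2}(\gamma)$. Together with the tagged user there are $n_2+1$ contenders in the $V$ mini-slots.

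Next, I would condition on $n_3$, the number of collision-free reservation packets among these $n_2+1$ uniform selections, which by \eqref{eqr1} occurs with probability $\mathrm{R}^{n_3}_{n_2+1}(V)$. By the symmetric uniform choice of mini-slots, the $n_2+1$ contenders are exchangeable, so given $n_3$ collision-free reservations the tagged user is equally likely to be any of the $n_2+1$ contenders; the probability that it is one of the $n_3$ collision-free ones is therefore exactly $n_3/(n_2+1)$. This is precisely where the derivation diverges from Lemma~\ref{lem1}: there the tagged user must additionally be among the first $M-1$ collision-free reservations, producing the factor $\min\{n_3,M-1\}/(n_2+1)$; here no cap is imposed and the factor is simply $n_3/(n_2+1)$.

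Multiplying these conditional probabilities and summing over the admissible ranges $n_1\in\{0,\dots,N-1\}$, $n_2\in\{0,\dots,n_1\}$, and $n_3\in\{1,\dots,\min\{V,n_2+1\}\}$ (the lower bound $n_3\ge 1$ is consistent with the tagged user having made a reservation, and the upper bound reflects that no more than $V$ mini-slots and no more than $n_2+1$ contenders exist) yields the claimed expression. I do not anticipate any real obstacle: the only subtlety is the exchangeability argument underpinning the $n_3/(n_2+1)$ factor, and this is inherited unchanged from the proof of Lemma~\ref{lem1} since the only modification is to drop the $M-1$ data-slot cap in the final step.
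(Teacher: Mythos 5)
Your proposal is correct and matches exactly what the paper intends: the paper omits the proof of this corollary, stating it is directly inferred from Lemma~\ref{lem1}, and your argument reproduces the proof of that lemma verbatim while replacing the final factor $\min\{n_3,M-1\}/(n_2+1)$ by $n_3/(n_2+1)$, since landing in a collision-free mini-slot no longer requires being among the first $M-1$ successful reservations. The exchangeability step giving $n_3/(n_2+1)$ is the same one the paper uses in Appendix~\ref{pl4}, so there is no gap.
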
 
The proof of Corollary \ref{col3} is omitted as it can be directly inferred from Lemma \ref{lem1}. With the formula above, we are now ready to analyze the relationship between $p_s^o\gamma$ and $\widetilde{p}_s\gamma$. By observing the similar expression of $p_s^o$ in Lemma \ref{lem1} and $\widetilde{p}_s$ in Corollary \ref{col3}, we can confirm our earlier deduction that $p_s^o\leq\widetilde{p}_s$, $\forall\gamma$. Specifically, the only difference between $p_s^o$ and $\widetilde{p}_s$ lies in the third summation over $n_3$, which is $\sum_{n_3=1}^{\min\{V,n_2+1\}}\mathrm{R}^{n_3}_{n_2+1}(V)$ $\frac{\min\{n_3, M-1\}}{n_2+1}$ in $p_s^o$ and $\sum_{n_3=1}^{\min\{V,n_2+1\}}\mathrm{R}^{n_3}_{n_2+1}(V)$ $\frac{n_3}{n_2+1}$ in $\widetilde{p}_s$, respectively. Note that $\mathrm{R}^{n_3}_{n_2+1}(V)$ defined in \eqref{eqr1} is non-negative. We then have $p_s^o\le\widetilde{p}_s$ and  $p_s^o\gamma\leq\widetilde{p}_s\gamma$, $\forall\gamma$. 
In this context, we can optimize the term ${p}^o_s\gamma$ indirectly through optimizing its upper bound $\widetilde{p}_s\gamma$. 
This is largely motivated by the fact that we can have a rather concise expression of $\widetilde{p}_s$. Specifically, from the perspective of an arbitrary active user reserving a collision-free mini-slot rather than considering how many active users reserving a collision-free mini-slot, the expression of $\widetilde{p}_s$ can be further simplified as follows
\begin{corollary}\label{co6}
$\widetilde{p}_s=\sum_{n_1=0}^{N-1}\sum_{n_2=0}^{n_1}\mathrm{B}_{N-1}^{n_1}(1-(1-\rho)^M)\mathrm{B}_{n_1}^{n_2}(\gamma)V\frac{1}{V}\left(1-\frac{1}{V}\right)^{n_2}=\left(1-\frac{\gamma \left(1-(1-\rho)^M\right)}{V}\right)^{N-1}$. \end{corollary}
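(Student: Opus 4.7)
The plan is to prove the two equalities in Corollary \ref{co6} separately. For the first equality, I would adopt the tagged-user viewpoint already hinted at in the paper: condition on an arbitrary active user making a reservation, and compute the probability that the chosen mini-slot is not picked by any other reserving user. For the second equality, I would simplify the resulting double sum via two applications of the binomial theorem.

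For the first equality, I would argue as follows. Fix an arbitrary active user and call it the tagged user. By exchangeability, let $n_1$ denote the number of the other $N-1$ users that are active and $n_2$ the number of those $n_1$ users that also make a reservation in the same frame. Since each of the $N-1$ other users is independently active with probability $1-(1-\rho)^M$, the probability of seeing $n_1$ other active users is $\mathrm{B}_{N-1}^{n_1}(1-(1-\rho)^M)$; conditional on this, each makes a reservation independently with probability $\gamma$, giving $\mathrm{B}_{n_1}^{n_2}(\gamma)$. Conditional on $n_2$ other reserving users, the tagged user uniformly chooses one of $V$ mini-slots (factor $V\cdot(1/V)$ coming from the $V$ equally likely choices), and the mini-slot is collision-free precisely when each of the $n_2$ other reserving users chooses a different mini-slot; by independence and uniformity, this probability is $(1-1/V)^{n_2}$. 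Multiplying and summing over $n_1,n_2$ yields the first expression.

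For the second equality, I would collapse the double sum via the binomial theorem. The inner sum over $n_2$ is
\begin{equation*}
\sum_{n_2=0}^{n_1}\binom{n_1}{n_2}\bigl(\gamma(1-1/V)\bigr)^{n_2}(1-\gamma)^{n_1-n_2}=\bigl(1-\gamma/V\bigr)^{n_1}.
\end{equation*}
Substituting back and applying the binomial theorem again in $n_1$ with $p\triangleq 1-(1-\rho)^M$ gives
\begin{equation*}
\sum_{n_1=0}^{N-1}\binom{N-1}{n_1}\bigl(p(1-\gamma/V)\bigr)^{n_1}(1-p)^{N-1-n_1}=\bigl(1-p\gamma/V\bigr)^{N-1},
\end{equation*}
which is the claimed closed form. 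As a consistency check, I would also verify the same formula by a direct per-user independence argument: each of the $N-1$ other users fails to interfere with the tagged user's mini-slot with probability $(1-p)+p(1-\gamma)+p\gamma(1-1/V)=1-p\gamma/V$, and by independence the joint probability is $(1-p\gamma/V)^{N-1}$.

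I do not anticipate a real obstacle here; both equalities are essentially bookkeeping once the correct probabilistic interpretation is fixed. The only mild subtlety is making sure the tagged-user reduction is stated carefully, namely that because of the symmetry of the uniform mini-slot choice, the factor $V\cdot(1/V)=1$ can be absorbed without loss, and that the events ``other users are active,'' ``they make reservations,'' and ``their mini-slot choices coincide with the tagged user's'' are mutually independent conditional on the counts $n_1,n_2$. Once this is in place, the two binomial collapses are routine.
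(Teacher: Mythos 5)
Your proof is correct and follows essentially the same route as the paper: a tagged-user argument conditioning on the number of other active users $n_1$ and reserving users $n_2$, with the collision-free probability $V\cdot\frac{1}{V}\left(1-\frac{1}{V}\right)^{n_2}$ and the law of total probability, followed by the binomial collapses that the paper only summarizes as ``some manipulations.'' Your explicit double application of the binomial theorem and the per-user independence cross-check $\left((1-p)+p(1-\gamma)+p\gamma(1-1/V)=1-p\gamma/V\right)$ are both valid and merely make explicit what the paper leaves implicit.
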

\begin{proof}
    See Appendix \ref{pc4}.
\end{proof}
Thanks to the concise expression of $\widetilde{p}_s$, we can derive a closed-form expression of the optimized reservation probability when other parameters are fixed, given in the following lemma:
\begin{lemma}\label{l5}
	Given status arrival rate $\rho$, the number of mini-slots in reservation slot $V$ and the number of users $N$, and a fixed frame size $M$, the optimized reservation probability $\gamma^* =\min\{1, \frac{V}{N\left(1-(1-\rho)^M\right)}\}$.
\end{lemma}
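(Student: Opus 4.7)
The plan is to reduce the minimization of the AAoI upper bound $\bar{\Delta}^U$ in \eqref{ageu} to a single-variable concave-like optimization and solve it in closed form. Observe that, with $\rho$, $V$, $N$, and $M$ fixed, the only $\gamma$-dependent term in $\bar{\Delta}^U$ is $\tfrac{M}{\gamma p_s^o(1-(1-\rho)^M)}$, so minimizing $\bar{\Delta}^U$ over $\gamma$ is equivalent to maximizing the product $\gamma p_s^o$. Since $p_s^o$ in Lemma \ref{lem1} is cumbersome, the key idea, already set up by the discussion preceding the lemma, is to replace the objective by its tractable upper bound $\gamma \widetilde{p}_s$, where by Corollary \ref{co6}
\begin{equation*}
\widetilde{p}_s=\left(1-\frac{\gamma\bigl(1-(1-\rho)^M\bigr)}{V}\right)^{N-1}.
\end{equation*}

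Next, I would let $q\triangleq 1-(1-\rho)^M$ and define $f(\gamma)\triangleq \gamma\bigl(1-q\gamma/V\bigr)^{N-1}$ on the feasible interval $(0,1]$. Differentiating gives
\begin{equation*}
f'(\gamma)=\bigl(1-q\gamma/V\bigr)^{N-2}\Bigl(1-\tfrac{Nq\gamma}{V}\Bigr),
\end{equation*}
so the unique interior stationary point is $\gamma_0=V/(Nq)$. A sign check on $f'$ shows $f$ is strictly increasing on $(0,\gamma_0)$ and strictly decreasing on $(\gamma_0, V/q)$, so $\gamma_0$ is the unconstrained maximizer. Projecting onto the feasible set $(0,1]$ yields $\gamma^\star=\min\{1,\,V/(Nq)\}=\min\bigl\{1,\,V/[N(1-(1-\rho)^M)]\bigr\}$, which is exactly the claimed expression.

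The only delicate point is justifying that using $\widetilde{p}_s$ in place of $p_s^o$ still produces the correct answer asserted in the lemma. I would emphasize that the inequality $p_s^o\gamma\le\widetilde{p}_s\gamma$ established right before the lemma (via the term-by-term comparison of Lemma \ref{lem1} and Corollary \ref{col3}) means $\widetilde{p}_s\gamma$ is a tight surrogate, and the lemma is phrased as the optimized (near-optimal, per the surrounding text) reservation probability, consistent with the paper's framing. A concise interpretation I would also include: at $\gamma^\star$, the expected number of reservation attempts $N q\gamma^\star$ equals $V$ (when the interior solution is active), i.e., the reservation probability is chosen so that the mean number of contending users matches the number of mini-slots, which is the structural design insight promised in the introduction.

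The main obstacle is not the calculus—it is monotone and routine—but rather the surrogate step: making it clear why optimizing $\gamma\widetilde{p}_s$ is the right proxy for optimizing $\gamma p_s^o$. Beyond invoking the pointwise bound already in the text, I would note that $p_s^o$ and $\widetilde{p}_s$ differ only in replacing $\min\{n_3,M-1\}$ by $n_3$ inside the same nonnegative sum, so they share the same qualitative $\gamma$-dependence; hence the maximizer of $\gamma\widetilde{p}_s$ serves as the near-optimal reservation probability used throughout the remainder of the paper.
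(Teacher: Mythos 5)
Your proof is correct and follows essentially the same route as the paper: both optimize the surrogate objective $\gamma\widetilde{p}_s$ using the closed form from Corollary \ref{co6}, compute the first-order derivative (your factored form $(1-q\gamma/V)^{N-2}(1-Nq\gamma/V)$ agrees with the paper's expression after simplification), identify the stationary point $V/(Nq)$, and project onto $(0,1]$. Your added justification of the surrogate step and the sign analysis are slightly more explicit than the paper's sketch, but the argument is the same.
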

\begin{proof}
	See Appendix \ref{pl5}.
\end{proof}
The simulation results presented in Sec. \ref{nr} confirm that the optimized reservation probability $\gamma^*$ is very close to optimal.

We are now ready to optimize the frame length $M$. For each given frame length $M$, we can find the corresponding $\gamma^*(M)$ by applying Lemma \ref{l5}. By substituting $\gamma^*(M)$ into \eqref{age}, we can express the AAoI as the function of frame length $M$. We then can optimize the frame length $M$ via an exhaustive search over its feasible region. Recall that $M$ is an integer and $M\in\{2,3,..., V+1\}$ as at most $V$ users can reserve a collision-free mini-slot in each reservation slot. Thus, the exhaustive search can be applied for the effective optimization of frame length $M$. The optimality of this method will be verified in numerical results.
\begin{remark}\label{rem1}
{We can see from Lemma \ref{l5} that the optimized reservation probability $\gamma^*$ tries to push the expected number of users making reservations to approach the number of mini-slots, $V$. This is reasonable as when the number of users making reservations exceeds $V$, the larger the number of users making reservations, the smaller the probability for each user to reserve a collision-free mini-slot, leading to a smaller successful status update probability. When the expected number of users making reservations is smaller than $V$, which is caused by a small reservation probability, it is also less likely to successfully transmit status updates as users may be too conservative to make reservation attempts. 

We also note that compared with FSA-RD, with the same system setups, FSA-RD-One tends to have a smaller number of active users at the beginning of each frame. The rationale is that in FSA-RD, the scheduled status updates that are not successfully received by the AP can be transmitted as long as they are not replaced by the latest generated status updates. But in FSA-RD-One, each status update scheduled for transmission has only one reservation attempt, thus can be transmitted at most once. In this sense, at the beginning of each frame, the number of active users in FSA-RD tends to be larger than that in FSA-RD-One. For the same reservation probability, it is thus more likely for FSA-RD to suffer from more reservation collisions compared with FSA-RD-One. While the optimized reservation probability is to achieve the expected $V$ users sending reservation packets for FSA-RD-One. With the same reservation probability, the number of the expected users sending reservation packets for FSA-RD will be larger than $V$, yielding a higher probability of reservation collision. In this sense, the optimal reservation probability of FSA-RD should be smaller than that of FSA-RD-One, which is later verified via simulation results.} 
\end{remark}
\section{Numerical results}\label{nr}
In this section, both numerical and analytical results of the AAoI of FSA-RD and FSA-RD-One are presented. 
\subsection{Validation of the analysis of AAoI for FSA-RD and FSA-RD-One}
{\color{blue}\begin{figure}[!t]
	\centering
	\begin{subfigure}[b]{0.45\textwidth}
		\centering
		\includegraphics[width=\textwidth]{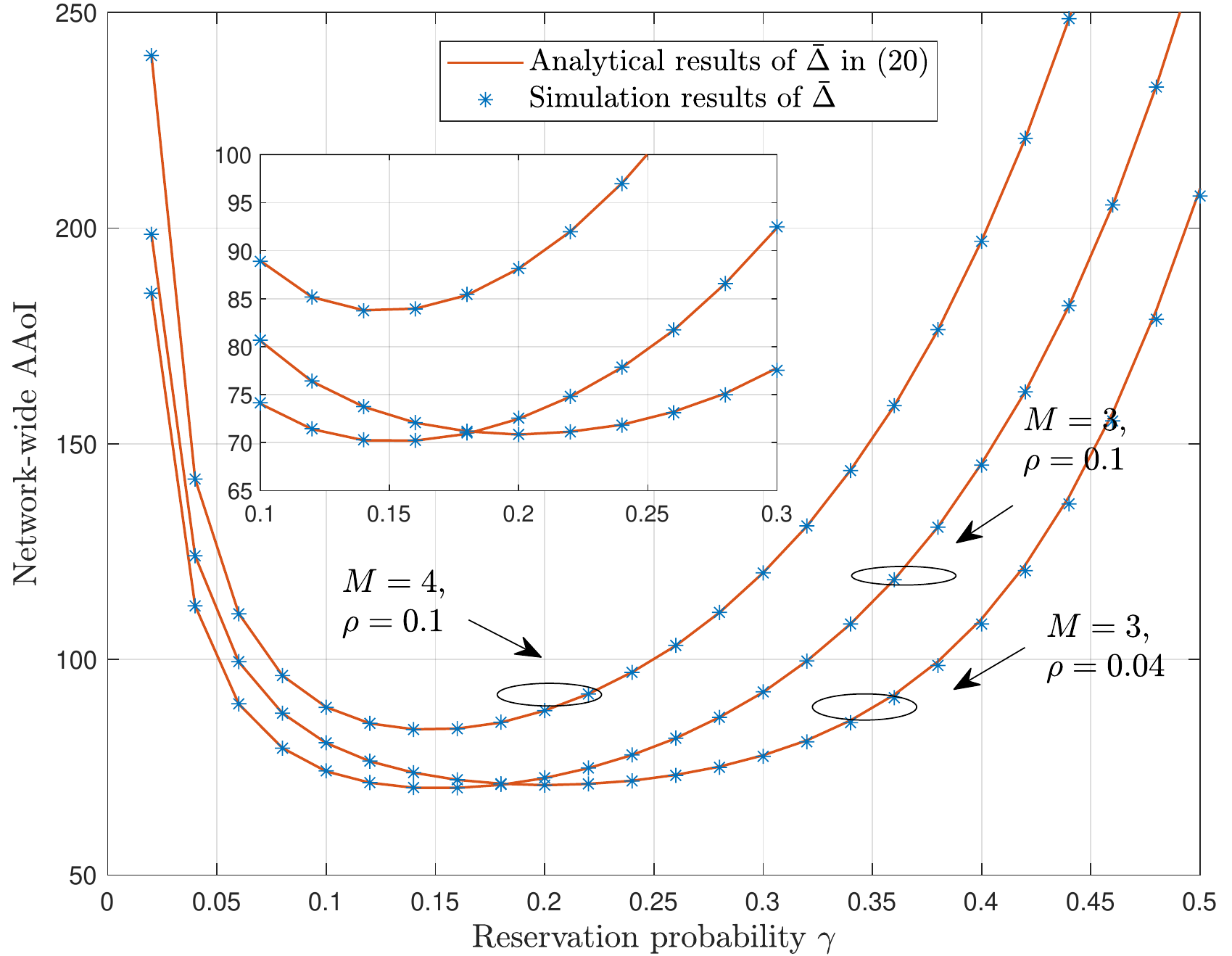}
		\caption{FSA-RD with $N=30$, $V=4$.}
		\label{fig3a}
	\end{subfigure}
	\begin{subfigure}[b]{0.45\textwidth}
		\centering
  \includegraphics[width=\textwidth]{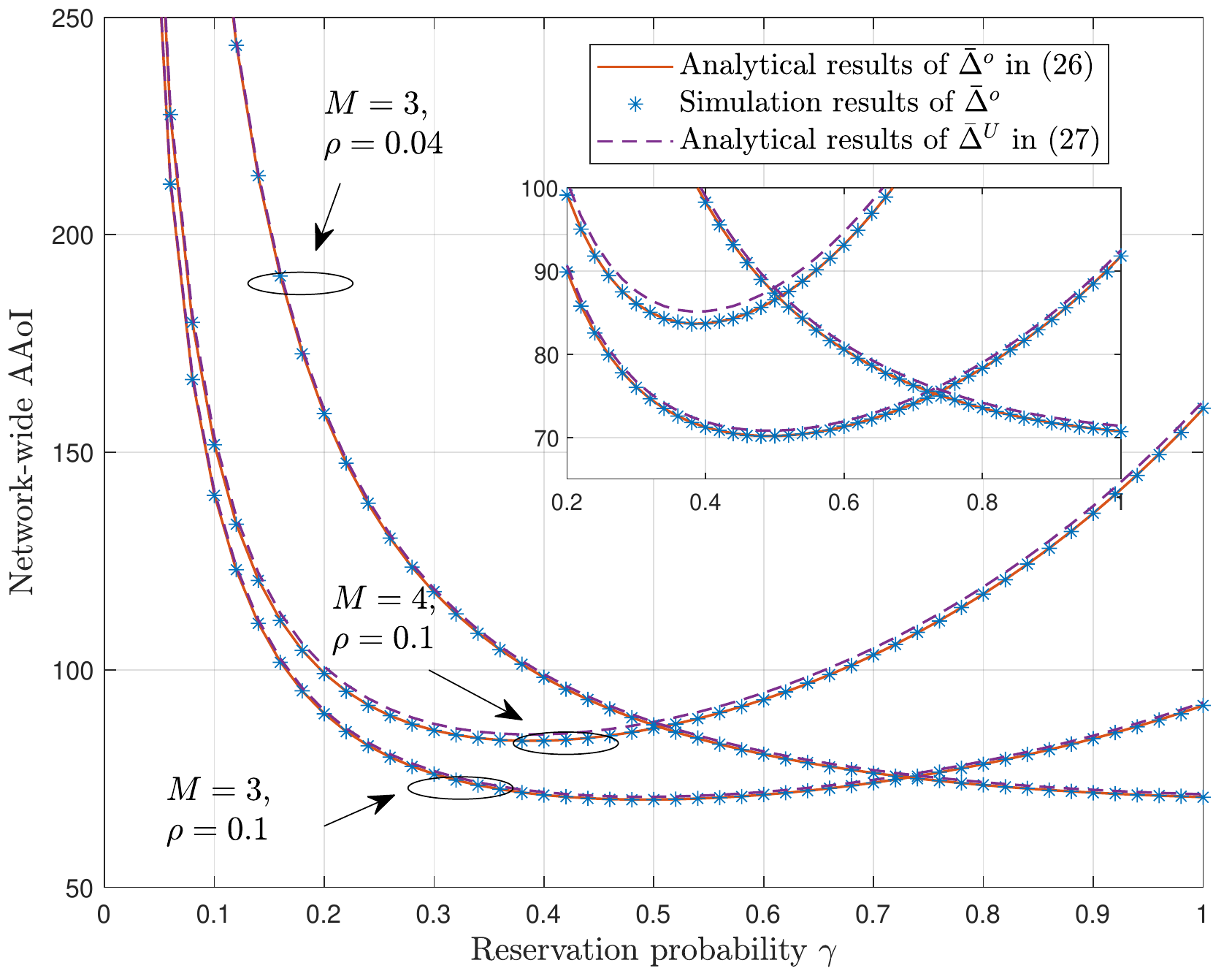}
		\caption{FSA-RD-One with $N=30$, $V=4$.}
		\label{fig3b}
	\end{subfigure}
 	\begin{subfigure}[b]{0.45\textwidth}
		\centering
  \includegraphics[width=\textwidth]{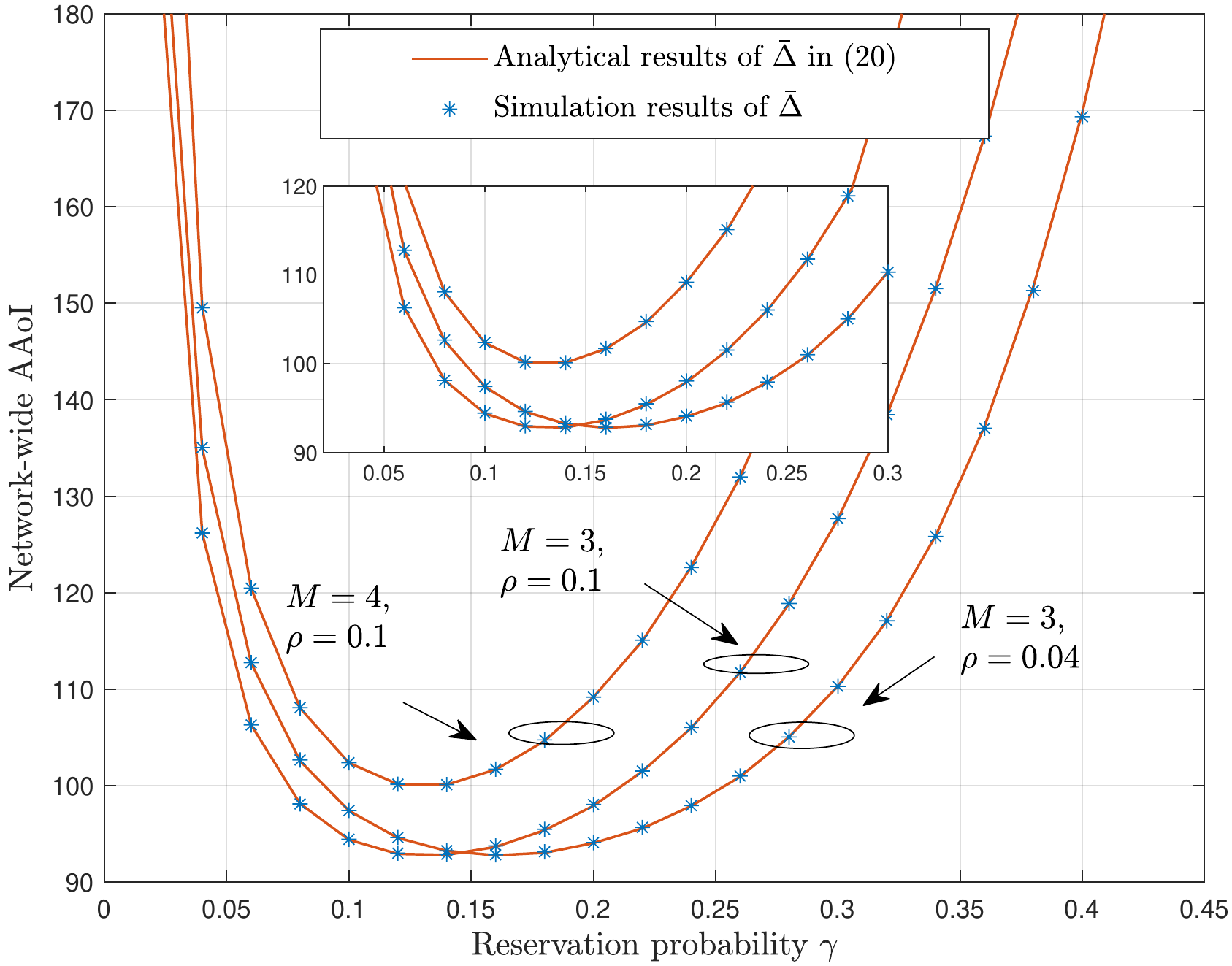}
		\caption{FSA-RD with $N=50$, $V=6$.}
		\label{fig3c}
	\end{subfigure}
 	\begin{subfigure}[b]{0.45\textwidth}
		\centering
  \includegraphics[width=\textwidth]{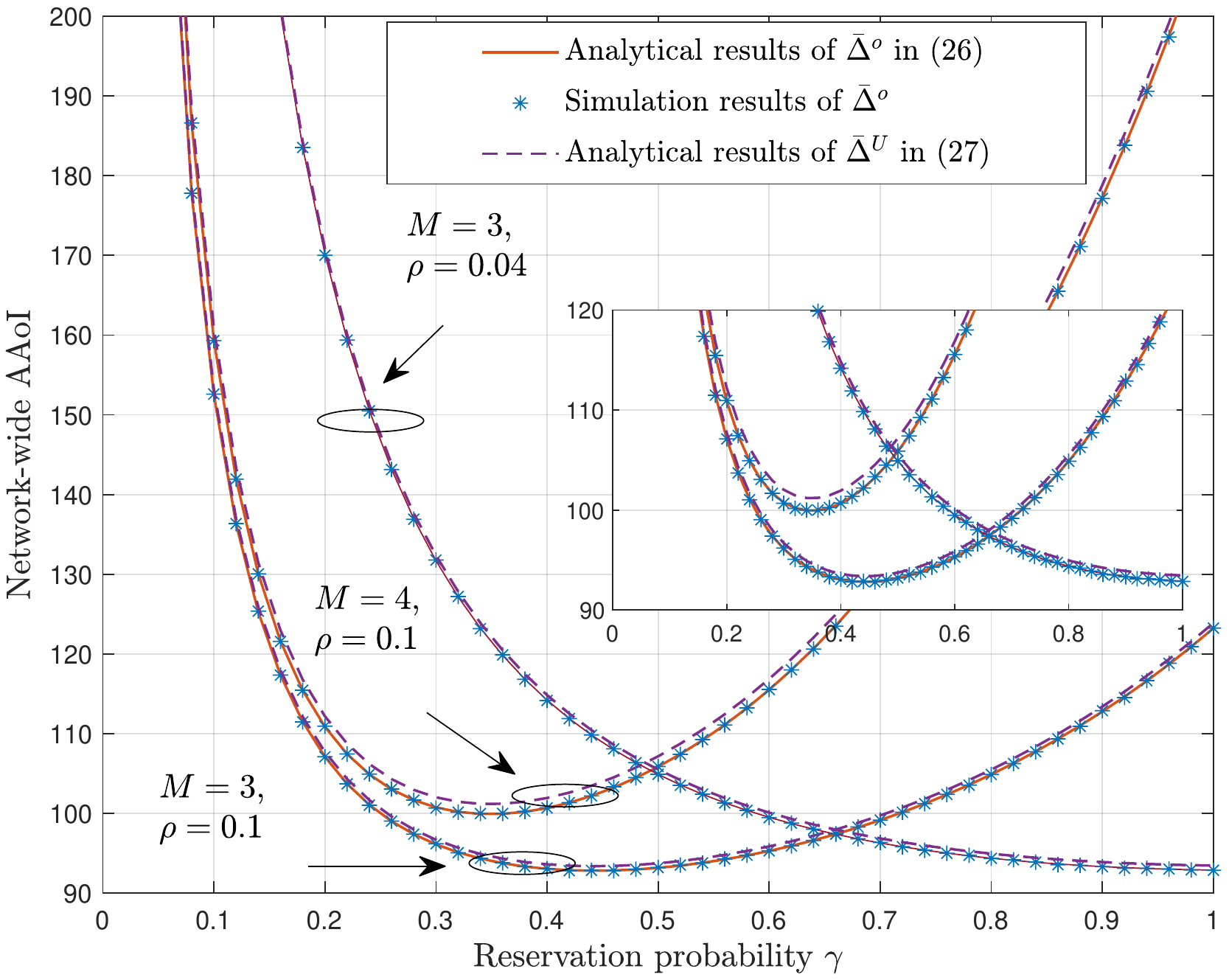}
		\caption{FSA-RD-One with $N=50$, $V=6$.}
		\label{fig3d}
	\end{subfigure}
	\caption{Network-wide AAoI versus reservation probability $\gamma$ for different frame length $M$ and status generation probability $\rho$.}
	\label{fig3*}
\end{figure}}
We first evaluate the derived analytical expression of AAoI for FSA-RD and FSA-RD-One by comparing them with Monte Carlo simulation results. {Figs. \ref{fig3a}, \ref{fig3c} and Figs. \ref{fig3b}, \ref{fig3d}} plot the AAoI curves versus reservation probability $\gamma$ of FSA-RD and FSA-RD-One, respectively, {considering multi-access systems with $N=30$, $V=4$ and $N=50$, $V=6$}. Each data point is obtained by averaging over $10^7$ time slots.

We can see from Fig. \ref{fig3*} that our analytical results coincide well with the corresponding simulation results, which validates our derivation. For FSA-RD-One, we also plot the upper bound of AAoI in \eqref{ageu} to verify the tightness.
We can also find that when the status arrival rate goes large, the optimal reservation probability should be neither too large nor too small for smaller AAoI. This is understandable, as a higher reservation probability is more likely to cause reservation failures (i.e., more reservation collisions), while a lower reservation probability makes users less likely to make a reservation. In both cases, fewer users transmit status updates in data slots, leading to a larger network-wide AAoI. 

Meanwhile, we observe that the optimal reservation probability of FSA-RD is smaller than that of FSA-RD-One. This confirms our analysis in Remark \ref{rem1} on the relationship between the optimal reservation probability of FSA-RD and that of FSA-RD-One. Furthermore, for FSA-RD-One, when the status generation probability is considerably small (e.g., $\rho=0.04$), the reservation probability should be as large as possible. The rationale is that when status updates are rarely generated, it is less likely to cause collision even when all users make reservations once they have a status update to transmit. In contrast, a small reservation probability may lead to the drop of rarely generated status updates, resulting in a larger AAoI.

In addition, comparing the performance of different frame lengths $M$, we observe that larger $M$ in some cases may cause performance degradation for both FSA-RD and FSA-RD-One. This is because when only a small number of users make successful reservations, after these users transmitting their status updates, the rest of the data slots in the frame will be wasted. 
\subsection{Evaluation of parameter design for FSA-RD-One}
\begin{figure}[!t]
	\centering
	\begin{subfigure}[b]{0.45\textwidth}
		\centering
		\includegraphics[width=\textwidth]{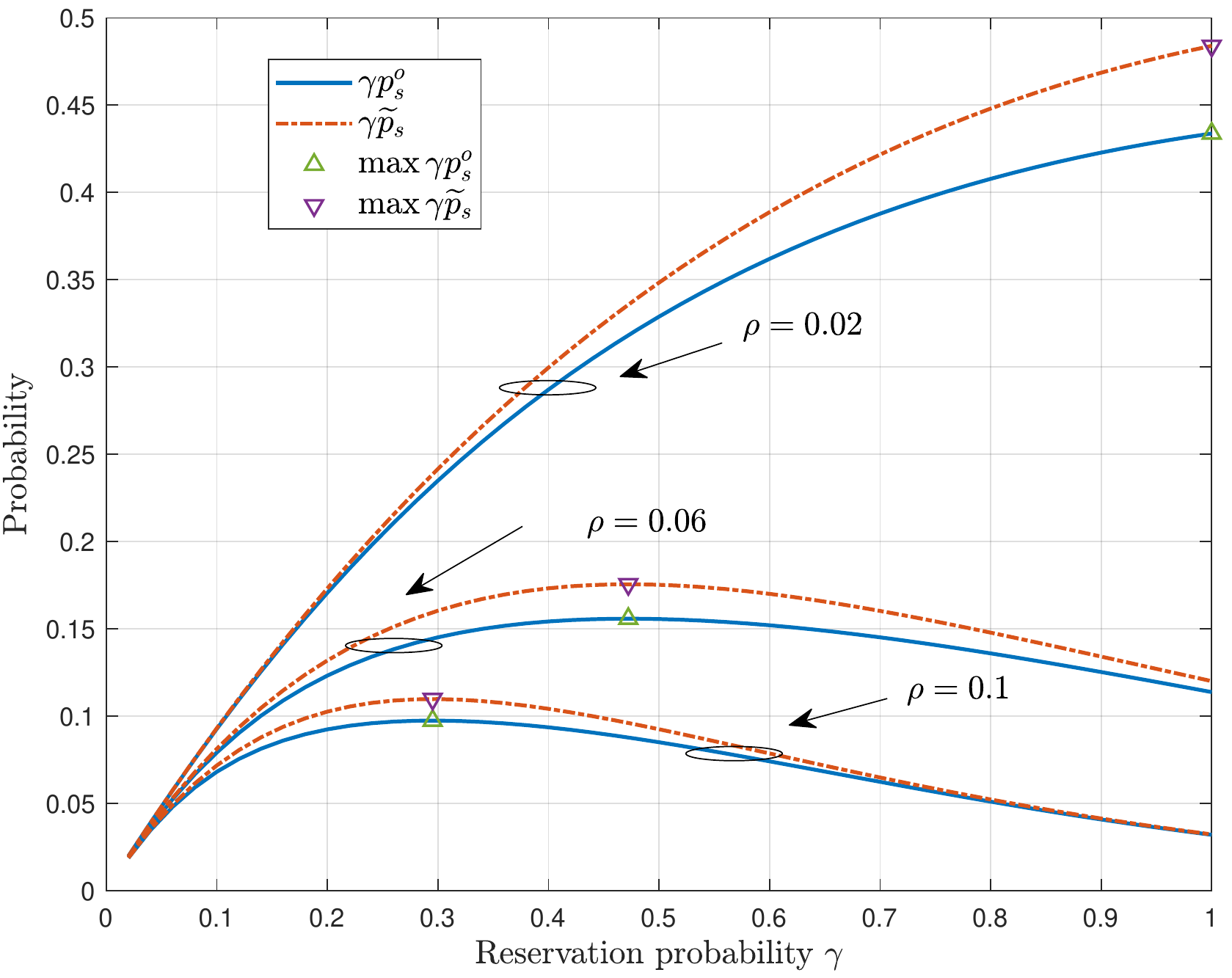}
		\caption{The relationship between $\widetilde{p}_s\gamma$ and $p_s^o\gamma$ with $M=3$ and $V=4$.}
		\label{fig4a}
	\end{subfigure}
	\begin{subfigure}[b]{0.45\textwidth}
		\centering
		\includegraphics[width=\textwidth]{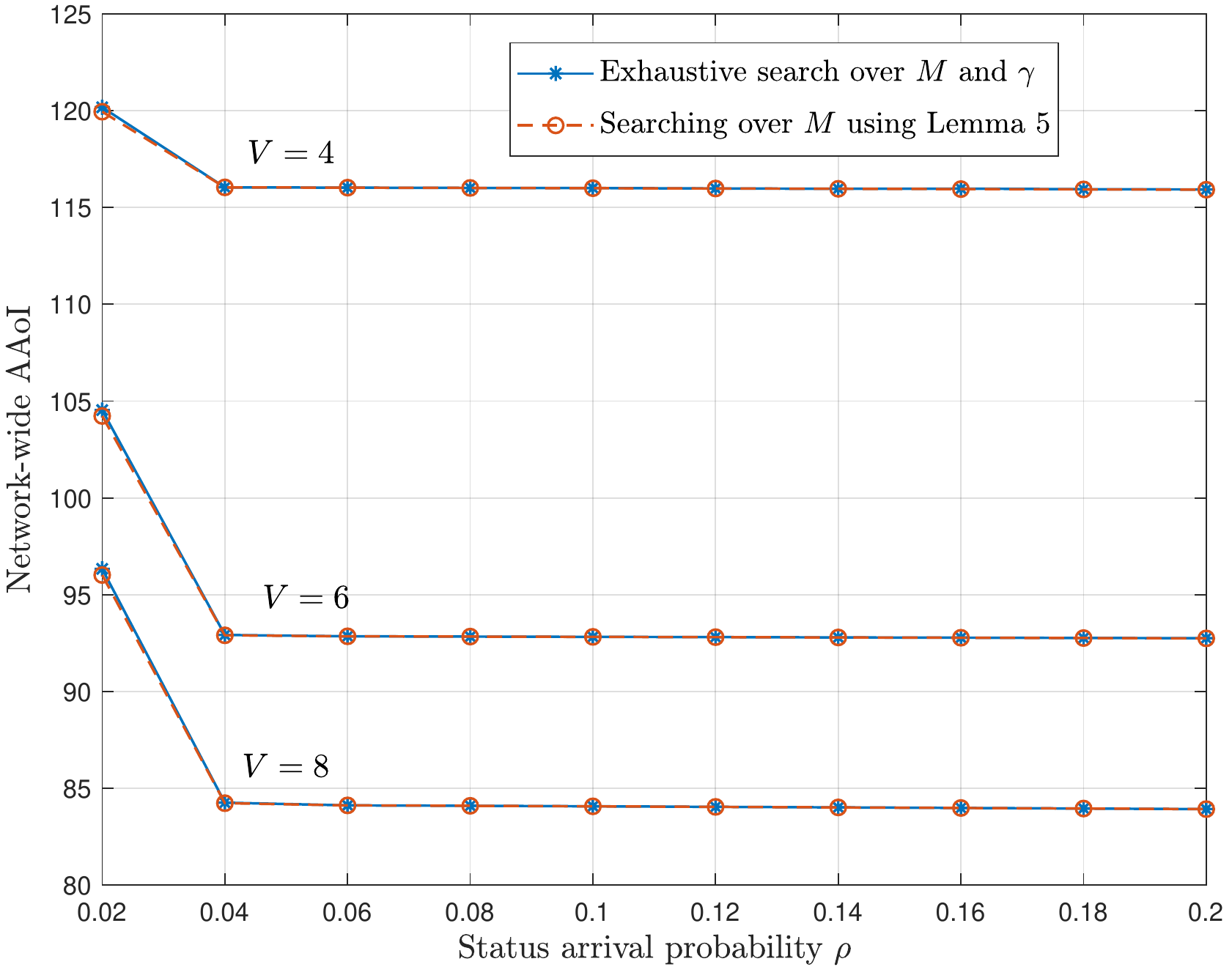}
		\caption{The comparison of optimized performance via exhaustive search and that via  using Lemma \ref{l5}.}
		\label{fig4b}
	\end{subfigure}
	\caption{Verification of the parameter design for FSA-RD-One considering a 50-user network.}
	\label{fig4}
\end{figure}
In Fig. \ref{fig4}, we evaluate our analysis for optimizing FSA-RD-One. We first evaluate the relationship between $p_s^o\gamma$ and $\widetilde{p}_s\gamma$ in Fig. \ref{fig4a}. Nearly the same trends of $p_s^o\gamma$ and $\widetilde{p}_s \gamma$ versus $\gamma$ in different cases validate the effectiveness of Lemma \ref{l5}. Moreover, by substituting $V$, $M$ and $\rho$ into the expression of $\gamma^*$ in Lemma \ref{l5}, we can obtain the maxima of $\widetilde{p}_s \gamma$. By comparing it with the maxima of $p_s^o \gamma$, they have nearly the same optimal point. After that, we compare the optimized performance obtained via exhaustive search over $M$ and $\gamma$ with that obtained by searching over $M$ with the help of Lemma \ref{l5}. Fig. \ref{fig4b} shows the closeness of the optimized performance obtained by these two methods, which illustrates the effectiveness of our optimization method. Moreover, compared with the two-dimensional exhaustive search where one of the parameters is continuous, using Lemma \ref{l5} makes parameter design much more simplified, where we get rid of searching over the continuous parameter $\gamma$ and focus on optimizing discrete variable $M\in[2, V+1]$. Furthermore, we can see that the larger the number of mini-slots (i.e., the value of $V$) in the reservation slot, the better the network-wide AAoI. This is natural as the larger the value of $V$, the higher the successful reservation probability.
\subsection{Performance comparison among FSA-RD, FSA-RD-One and slotted ALOHA}
In this subsection, we first conduct performance comparisons between FSA-RD and FSA-RD-One over different system setups in Fig. \ref{fig3}. Specifically, for FSA-RD, we adopt an exhaustive search over frame size $M$ and reservation probability $\gamma$ to obtain the optimal performance; for FSA-RD-One, we make use of Lemma \ref{l5} to obtain the near-optimal performance for each $M$ and compare the performance of different values of $M$. We find that when the network status arrival rate (i.e., $\rho N$) is small, the AAoI of the optimized FSA-RD is better than that of FSA-RD-One. As $\rho N$ increases, the AAoI of these two transmission schemes becomes similar. This is because when the status arrival rate is small, the rarely arrived status update has at most one reservation opportunity regardless of its reservation result, leading to worse AAoI for FSA-RD-One. When the status arrival rate becomes large, due to the status preemption during the transmission process, even when multiple reservation attempts are allowed for FSA-RD, status updates for transmission that are not received by the AP will be replaced by the newly arrived status update rather than transmitted to the AP. In this scenario, FSA-RD will behave like FSA-RD-One leading to similar AAoI. This phenomenon can also be observed in Table \ref{table1}.
\begin{figure}[!t]
	\centering \scalebox{0.5}{\includegraphics{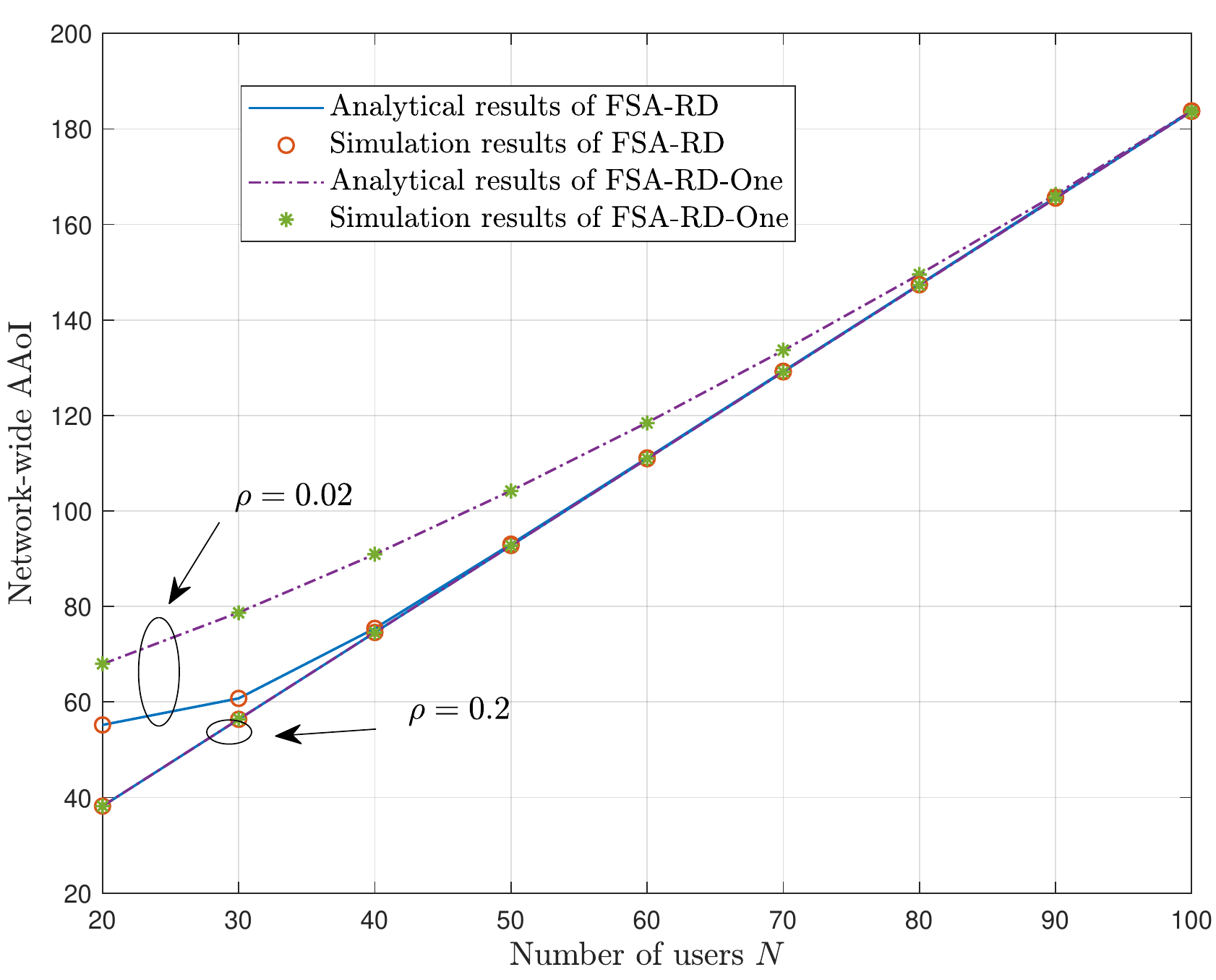}}
	\caption{Optimized network-wide AAoI versus the number of users $N$ with $V=6$.}
	\label{fig3}
\end{figure}

\begin{table}[!http]
	\begin{subtable}[t]{\textwidth}
		\centering
 \resizebox{\textwidth}{!}{%
\begin{tabular}{|c|*{15}{c|}}
\hline
\multirow{2}{*}{} & \multicolumn{3}{c|}{$\rho=0.01$} &\multicolumn{3}{c|}{$\rho=0.02$}&\multicolumn{3}{c|}{$\rho=0.04$}&\multicolumn{3}{c|}{$\rho=0.08$}&\multicolumn{3}{c|}{$\rho=0.1$}\\
\cline{2-16}
& $\gamma^*$ & $M^*$ & AAoI & $\gamma^*$ & $M^*$ & AAoI& $\gamma^*$ & $M^*$ & AAoI& $\gamma^*$ & $M^*$ & AAoI& $\gamma^*$ & $M^*$ & AAoI\\
\hline
FSA-RD, $V=4$ & 0.82 & 2 & \textbf{105.55} & 0.38 & 2 & \textbf{72.38}& 0.20 & 3 & \textbf{70.25}& 0.16 & 3 & \textbf{70.16}&0.15 & 3 & \textbf{70.15}\\
\hline
FSA-RD, $V=6$ & 1 & 2 & \textbf{104.37} & 0.85 & 3 & \textbf{60.75}& 0.35 & 3 & \textbf{56.53}& 0.25 & 3 & \textbf{56.45}&0.24 & 3 & \textbf{56.45}\\
\hline
FSA-RD, $V=8$ & 1 & 3 & \textbf{104.16} & 1 & 3 & \textbf{57.84}& 0.50 & 3 & \textbf{51.38}& 0.34 & 3 & \textbf{51.32}&0.32 & 3 & \textbf{52.30}\\
\hline
FSA-RD-One, $V=4$ & 1 & 3 & 131.16 & 1 & 3 & 86.46& 1 & 3 & \textbf{70.74}& 0.6025 & 3 & \textbf{70.18}&0.4920 & 3 & \textbf{70.16}\\
\hline
FSA-RD-One, $V=6$ & 1 & 3 & 124.06& 1 & 3 &\textbf{78.74}& 1 & 3 & \textbf{60.42}& 0.9037 & 3 & \textbf{56.47}&0.7380 & 3 & \textbf{56.46}\\
\hline
FSA-RD-One, $V=8$ & 1 & 3 & 120.82& 1 & 4&\textbf{74.55}& 1 & 4 & \textbf{55.67}& 0.9403 & 4 & \textbf{51.37}&0.9840 & 3 & \textbf{51.32}\\
\hline
slotted ALOHA &\diagbox{}{}  & \diagbox{}{}& 110.14& \diagbox{}{}& \diagbox{}{}&82.55& \diagbox{}{} & \diagbox{}{} & 81.30& \diagbox{}{} & \diagbox{}{} & 80.22&\diagbox{}{} & \diagbox{}{} & 80.12\\
\hline
\end{tabular}}
		\caption{Network-wide AAoI versus packet generation probability $\rho$ with $N=30$.}
		\label{tab:week1}
	\end{subtable} \\
	\begin{subtable}[t]{\textwidth}
		\centering
\resizebox{\textwidth}{!}{%
		\begin{tabular}{|c|*{12}{c|} }
		\hline
  \multirow{2}{*}{} & \multicolumn{3}{c|}{$N=10$} &\multicolumn{3}{c|}{$N=20$}&\multicolumn{3}{c|}{$N=40$}&\multicolumn{3}{c|}{$N=50$}\\
\cline{2-13}
& $\gamma^*$ & $M^*$ & AAoI & $\gamma^*$ & $M^*$ & AAoI& $\gamma^*$ & $M^*$ & AAoI& $\gamma^*$ & $M^*$ & AAoI\\
\hline
		FSA-RD, $V=4$ & 1 & 2 & \textbf{30.58} & 0.4 & 3 &\textbf{47.71} & 0.13 & 3 &\textbf{93.14}&0.10 & 3 &\textbf{116.02}\\
		\hline
		FSA-RD, $V=6$ & 1 & 3 & \textbf{29.77}& 0.77 &3 &  \textbf{38.89}& 0.22 &3 &\textbf{74.67}& 0.16 & 3 &\textbf{92.84}\\
  \hline
		FSA-RD, $V=8$ & 1 & 3 & \textbf{29.45}& 1 &3 &  \textbf{35.78}& 0.51 &3 &\textbf{67.73}& 0.22 & 3 &\textbf{84.12}\\
		\hline
		FSA-RD-One, $V=4$ &1 & 3 & 37.40& 1 & 3 &  \textbf{52.12}& 0.8676 & 3 & \textbf{93.12} & 0.6941 & 3 &\textbf{116.04}\\
		\hline
		FSA-RD-One, $V=6$ & 1 & 3 & 35.12&  1 & 3 & \textbf{46.63}& 1 & 3 & \textbf{75.89}& 1 & 3 &\textbf{92.90}\\
  \hline
		FSA-RD-One, $V=8$ & 1 & 3 & 34.09&  1 & 4 & \textbf{43.89}& 1 & 4 & \textbf{69.19}& 1 & 4 &\textbf{84.23}\\
		\hline
		slotted ALOHA & \diagbox{}{} & \diagbox{}{} & 31.63& \diagbox{}{} & \diagbox{}{} & 53.72 &\diagbox{}{} & \diagbox{}{} & 107.66 & \diagbox{}{} & \diagbox{}{}& 136.97\\
		\hline	
		\end{tabular}}
           
		\caption{Network-wide AAoI versus total number of users $N$ with $\rho=0.04$.}
		\label{tab:week2}
	\end{subtable}
	\caption{Performance comparison among optimized FSA-RD, FSA-RD-One and slotted ALOHA.}
	\label{table1}
\end{table}

In Table \ref{table1}, we compare the optimized AAoI of FSA-RD and FSA-RD-One, with that of slotted ALOHA. Specifically, given the network setups, {we use the same method as in Fig. \ref{fig3} to obtain the values of the optimized reservation probability $\gamma^*$ and frame length $M^*$ as well as the corresponding AAoI for FSA-RD and the near-optimal AAoI for FSA-RD-One.} As for slotted ALOHA, we simulate the age evolution with different values of transmission probability to find its optimal AAoI. We can see that FSA-RD always outperforms slotted ALOHA and the performance gap between them enlarges as the $\rho N$ becomes large. For FSA-RD-One, when $\rho N$ is small, the slotted ALOHA has a smaller AAoI. As $\rho N$ increases through either increasing $N$ or $\rho$, the optimal performance of FSA-RD-One improves and becomes better than that of slotted ALOHA with substantial AAoI reduction, 
approaching the performance of FSA-RD. This observation of FSA-RD-One could be jointly caused by the one-slot reservation overhead and the frame-based transmission. Specifically, under FSA-RD-One, the reservation slot in each frame will be used for reservation and no status update can be transmitted within this slot, while slotted ALOHA makes use of every slot for status update. In addition, each status update can be transmitted at most once in FSA-RD-One due to the constraint of one reservation attempt per status update. This will lead to the drops of rarely generated status updates when the status generation rate is small. By contrast, slotted ALOHA allows retransmission for each status update as long as it is not preempted by a newly generated status update. These jointly yield that FSA-RD-One is inferior to slotted ALOHA when the network-wide status arrival rate is small.

Furthermore, upon observing the optimized frame size and reservation probability, we can deduce that for any system setup, the optimal reservation probability is non-decreasing in the number of mini-slots $V$, when the frame size is fixed. That is, a larger value of $V$ corresponds to a higher value of the optimized $\gamma^*$, given the frame size $M^*$, for both FSA-RD and FSA-RD-One. Moreover, the optimized reservation probability is non-increasing in the number of users $N$ and the status generation probability $\rho$ given the values of $V$ and $M^*$.  This result is consistent with the monotonicity of $\gamma^*$ stated in Lemma \ref{l5}. As for $M^*$, in most cases, it increases as the values of $V$, $N$, and $\rho$ increase for both FSA-RD and FSA-RD-One. However, we observed an exception in FSA-RD-One, the optimized frame size for $\rho=0.1$, $V=8$ and $N=30$ is $3$, mathematically, $M^{*}(\rho=0.1,V=8,N=30)=3$, while $M^{*}(\rho=0.08,V=8,N=30)=4$. This situation could be attributed to the combined effect of reservation probability and frame size for a better AAoI. Specifically, although the optimized frame size $M^*$ for $\rho=0.1$ is smaller than that for $\rho=0.08$, the corresponding optimized reservation probability $\gamma^*$ for the former case is larger than that for the latter case. Besides, the approximation made  to derive the near-optimal reservation probability may also be a possible reason for this exception.
\section{Conclusions}
{We investigated the average age of information (AAoI) of Frame Slotted ALOHA with Reservation and Data slots (FSA-RD), and its simplified version, FSA-RD with one reservation attempt per status update (FSA-RD-One). We derived an analytical expression for the AAoI of FSA-RD. We also attained closed-form expressions for the AAoI of FSA-RD-One and its near-optimal reservation probability. The correctness of the AAoI analysis of the two schemes was confirmed by numerical results. Simulation results showed that the optimized FSA-RD always outperforms the optimized slotted ALOHA. When the status arrival rate of the network becomes large, FSA-RD and FSA-RD-One, with optimized reservation probability and frame size, will achieve similar AAoI performance, which is substantially lower than that of the optimized slotted ALOHA. Future work will include the investigation of the AAoI of FSA-RD with variable frame length and other contention schemes for making reservations.} 

\begin{appendices}
\section{Proof of Lemma \ref{l1}}
\label{pnl1}
	Recall that $X_{i}=M-\alpha_{i-1}+Y_{i}-M+\alpha_{i}$, we have $\mathbb{E}[X_i^2]=\mathbb{E}[\alpha_{i}^2]+\mathbb{E}[\alpha_{i-1}^2]-2\mathbb{E}[\alpha_{i}\alpha_{i-1}]+\mathbb{E}[Y_i^2]+\mathbb{E}[2(\alpha_{i}-\alpha_{i-1})Y_i]$. As $\alpha_{i}$ and $\alpha_{i-1}$ are i.i.d., and both of them are independent of $Y_i$, we have $\mathbb{E}[2(\alpha_{i}-\alpha_{i-1})Y_i]=0$ and $\mathbb{E}[\alpha_{i}^2]+\mathbb{E}[\alpha_{i-1}^2]-2\mathbb{E}[\alpha_{i}\alpha_{i-1}]=2\mathrm{Var}(\alpha)$. This completes the proof.

\section{Proof of Lemma \ref{l2}}
\label{pnl2}
	Recall that $X_{i}=M-\alpha_{i-1}+Y_{i}-M+\alpha_{i}$ and $S_{i-1}=l_{i-1}+Z_{i-1}-M+\alpha_{i-1}$. Thus, $\mathbb{E}[S_{i-1}X_i]=\mathbb{E}[(l_{i-1}+Z_{i-1}-M+\alpha_{i-1})(M-\alpha_{i-1}+Y_{i}-M+\alpha_{i})]\overset{a}{=}\mathbb{E}[S_{i-1}]\mathbb{E}[Y_i]+\mathbb{E}[(l_{i-1}+Z_{i-1}-M+\alpha_{i-1})(\alpha_i-\alpha_{i-1})]\overset{b}{=}\mathbb{E}[S_{i-1}]\mathbb{E}[Y_i]+\mathbb{E}[l_{i-1}+Z_{i-1}-M]\mathbb{E}[\alpha_{i}-\alpha_{i-1}]+\mathbb{E}[\alpha_{i-1}(\alpha_{i}-\alpha_{i-1})]$, where $\overset{a}{=}$ is because of the independence among $l_{i-1}$, $\alpha_{i-1}$, $Z_{i-1}$ and $Y_i$; and $\overset{b}{=}$ is because of the independence among $\alpha_{i}$, $\alpha_{i-1}$, $l_{i-1}$ and $Z_{i-1}$. As $\alpha_{i}$ and $\alpha_{i-1}$ are i.i.d., we have $\mathbb{E}[\alpha_{i}-\alpha_{i-1}]=0$ and $\mathbb{E}[\alpha_{i-1}(\alpha_{i}-\alpha_{i-1})]=-\mathrm{Var}(\alpha)$. This completes the proof.

	\section{Proof of Lemma \ref{l3}}
	\label{pl1}
	As we consider a symmetric network, $p_s$ is the same for all users. We thus consider the $p_s$ for arbitrary user $n$. Note that $p_s$ depends on the number of users excluding user $n$ that decide to make reservations in a certain frame, denoted by $N_r$. We have $N_r\in\{0,1,...,N_g\}$, where $N_g$ denotes the number of active users excluding user $n$ with a status update packet to transmit, and thus $N_g\in\{0,1,...,N-1\}$. According to the steady state distribution of the number of active users $\boldsymbol{\pi}$, the probability that there are $n_1+1$ active users including user $n$ is given by $\psi_{n_1}=\pi_{n_1+1}\frac{n_1+1}{N}$, where $\frac{n_1+1}{N}=\frac{C_{N-1}^{n_1}}{C_N^{n_1+1}}$ is the probability that $n_1+1$ active user selections including user $n$. We then normalize $\psi_{n_1}$ to obtain the distribution for $N_g$. That is, $\mathrm{Pr}\left\{N_g=n_1\right\}=\frac{\pi_{n_1+1}(n_1+1)}{\sum_{j=0}^{N-1}\pi_{j+1}(j+1)}$. According to the reservation scheme, the conditional PMF of $N_r$ given $N_g=n_1$ can be expressed as $\mathrm{Pr}\left(N_r=n_2|N_g=n_1\right)=\mathrm{B}_{n_1}^{n_2}(\gamma)$. Together with user $n$, there will be $N_r+1$ users randomly selecting one of the $V$ min-slots to send a short reservation packet. 
 
 The number of min-slots that are reserved by one single user, denoted by $N_s$, depends on the total number of reservation users.  According to \cite[Eq. 6]{szpankowski1983analysis} and \eqref{eqr1}, we have $\mathrm{Pr}\left(N_s=n_3|N_r=n_2\right)=\mathrm{R}^{n_3}_{n_2+1}(V)$ denoting the probability that $n_2+1$ users successfully reserve $n_3$ mini-slots. Due to the identical reservation scheme of each user and the limited $M-1$ data slots, the successful reservation probabilities of the $n_2+1$ users are the same. As such, for any of the $n_2+1$ users, the probability of successful status update is $\frac{\min\{n_3,M-1\}}{n_2+1}$, i.e., $\mathrm{Pr}\left\{I_a^n=1|N_s=n_3,N_r=n_2\right\}=\frac{\min\{n_3,M-1\}}{n_2+1}$. Based on all the above analysis and the law of total probability, we arrive at the expression of $p_s$ given in Lemma \ref{l3}. This completes the proof.
 \section{Proof of Lemma \ref{lem1}}
 \label{pl4}
 Note that $p_s^o$ depends on the number of users excluding user $n$ that decide to make reservations in a certain frame, denoted by $N_r$. We have $N_r\in\{0,1,...,N_g\}$, where $N_g$ denotes the number of users excluding user $n$ that have a status update packet to transmit, and thus $N_g\in\{0,1,...,N-1\}$. According to the status generation model and reservation scheme, the PMF of $N_g$ and the conditional PMF of $N_r$ given $N_g=n_1$ can be expressed as  $\mathrm{Pr}\left\{N_g=n_1\right\}=\mathrm{B}_{n_1}^{n_2}(1-(1-\rho)^M)$ and $\mathrm{Pr}\left(N_r=n_2|N_g=n_1\right)=\mathrm{B}_{n_1}^{n_2}(\gamma)$, respectively. Together with user $n$, there will be $N_r+1$ users randomly selecting one of the $V$ min-slots to send a short reservation packet. The number of min-slots that are reserved by one single user, denoted by $N_s$, depends on the total number of reservation users. According to \cite[Eq. 6]{szpankowski1983analysis}, we have $\mathrm{Pr}\left(N_s=n_3|N_r=n_2\right)=\mathrm{R}^{n_3}_{n_2+1}(V)$ denoting the probability that $n_2+1$ users successfully reserve $n_3$ mini-slots. Due to the identical reservation scheme of each user and the limitation of $M-1$ data slots, the successful reservation probabilities of the $n_2+1$ users are the same. For any of the $n_2+1$ users, the probability of successful status update is $\frac{\min\{n_3,M-1\}}{n_2+1}$, i.e., $\mathrm{Pr}\left\{I_a^n=1|N_s=n_3,N_r=n_2\right\}=\frac{\min\{n_3,M-1\}}{n_2+1}$. Based on all the above analysis and the law of total probability, we arrive at the expression of $p_s^o$ given in Lemma \ref{lem1}. This completes the proof.

\section{Proof of Corollary \ref{co6}}
\label{pc4}
    As most of the proof is similar to that in Lemma \ref{lem1}, we offer a sketch of it here. Specifically, $\widetilde{p}_s$ depends on the number of users making reservations excluding user $n$ (i.e., $n_2$) in each frame, and the number of users making reservations depends on the number of users with a status update to transmit (i.e., active users $n_1$). Given $n_2$ users making reservations, the successful transmission probability of user $n$ making a reservation is $V\frac{1}{V}\left(1-\frac{1}{V}\right)^{n_2}$, which is the product of $V$ mini-slot choices,  the probability of each choice, $1/V$, and the probability that the rest $n_2$ users do not select the chosen mini-slot,  $\left(1-\frac{1}{V}\right)^{n_2}$. After applying the law of total probability and some manipulations, we arrive at the expression of $\widetilde{p}_s$. This completes the proof.
 
\section{Proof of Lemma \ref{l5}}
	\label{pl5}
	Based on the expression of $\widetilde{p}_s$ given in Corollary \ref{co6}, 
 we calculate the first order derivative of $\widetilde{p}_s\gamma$ and get \[-\frac{V\cdot\left(N\left(1-(1-\rho)^M\right){\gamma}-V\right)\left(1-\frac{\left(1-(1-\rho)^M\right){\gamma}}{V}\right)^N}{\left(\left(1-(1-\rho)^M\right){\gamma}-V\right)^2}.\] We then calculate the zero-root of the first order derivative and analyze the sign to determine the monotonicity to obtain $\gamma^*$. This completes the proof.
\end{appendices}


%

\ifCLASSOPTIONcaptionsoff
  \newpage
\fi



%
\bibliography{ref}
\bibliographystyle{IEEEtran}

%




\end{document}